\tikzset{every state/.style={minimum size=0pt}}
\tikzset{%
glow/.style={%
preaction={#1, draw, line join=round, line width=0.5pt, opacity=0.04,
preaction={#1, draw, line join=round, line width=1.0pt, opacity=0.04,
preaction={#1, draw, line join=round, line width=1.5pt, opacity=0.04,
preaction={#1, draw, line join=round, line width=2.0pt, opacity=0.04,
preaction={#1, draw, line join=round, line width=2.5pt, opacity=0.04,
preaction={#1, draw, line join=round, line width=3.0pt, opacity=0.04,
preaction={#1, draw, line join=round, line width=3.5pt, opacity=0.04,
preaction={#1, draw, line join=round, line width=4.0pt, opacity=0.04,
preaction={#1, draw, line join=round, line width=4.5pt, opacity=0.04,
preaction={#1, draw, line join=round, line width=5.0pt, opacity=0.04,
preaction={#1, draw, line join=round, line width=5.5pt, opacity=0.04,
preaction={#1, draw, line join=round, line width=6.0pt, opacity=0.04,
}}}}}}}}}}}}}}
\newtheorem{theorem}{Theorem}[section]
\newtheorem{corollary}[theorem]{Corollary}
\newtheorem{lemma}[theorem]{Lemma}
\newtheorem{proposition}[theorem]{Proposition}
\newtheorem{remark}[theorem]{Remark}
\newtheorem{defn}[theorem]{Definition}
\author[Sen et al.]{Sagnik Sen\affiliationmark{1}\thanks{Supported by French ANR project ``HOSIGRA'' (ANR-17-CE40-0022), IFCAM project ``Applications of graph homomorphisms''
(MA/IFCAM/18/39)and SERB-MATRICS ``Oriented chromatic and clique number of planar graphs'' (MTR/2021/000858)}
  \and \'Eric Sopena\affiliationmark{2}
  \and S. Taruni\affiliationmark{1,3}\thanks{Supported by Centro de Modelamiento Matemático (CMM) BASAL fund FB210005 for center of excellence from ANID-Chile.}}
\title{Homomorphisms of $(n,m)$-graphs with respect to generalized switch}
\affiliation{
  Indian Institute of Technology Dharwad, Karnataka, India.\\
 Univ. Bordeaux, CNRS, Bordeaux INP, LaBRI, UMR5800, F-33400 Talence, France.\\
  Centro de Modelamiento Matemático (CNRS IRL2807), Universidad de Chile, Santiago, Chile.}
\keywords{colored mixed graphs, switching, homomorphisms, categorical product, chromatic number.}
\begin{document}
\publicationdata{vol. 27:3}{2025}{26}{10.46298/dmtcs.13196}{2024-03-08; 2024-03-08; 2025-08-12}{2025-11-20}
\maketitle
\begin{abstract}
  The study of homomorphisms of $(n,m)$-graphs, that is, 
adjacency preserving vertex mappings of graphs with $n$ types of arcs and $m$ types of edges was initiated by Ne\v{s}et\v{r}il and Raspaud in 2000. 
Later, some attempts were made to generalize the switch operation that is popularly used in the study of signed graphs, and study its effect on the above mentioned homomorphism. 

In this article, we too provide a generalization of the switch operation on $(n,m)$-graphs, which to the best of our knowledge, encapsulates all the previously known generalizations as special cases. We approach the study of homomorphisms with respect to the switch operation axiomatically. We prove some fundamental results that are essential tools in the further study of this topic.
In the process of proving the fundamental results, we have provided yet another solution to an open problem posed by
Klostermeyer and MacGillivray in 2004.
We also prove the existence of a categorical product for $(n,m)$-graphs with respect to a particular class of generalized switch which implicitly uses category theory. This is a counter intuitive solution as the number of vertices in the Categorical product of two $(n,m)$-graphs on $p$ and $q$ vertices has a multiple of $pq$ many vertices, where the multiple depends on the switch. 
This solves an open question asked by Brewster in 
the PEPS 2012 workshop as a corollary. 
We also provide a way to calculate the product explicitly, and prove general properties of the product.  We define the analog of chromatic number for $(n,m)$-graphs with respect to generalized switch and explore the interrelations between chromatic numbers with respect to different switch operations. We find the value of this chromatic number for the family of forests using group theoretic notions.  
\end{abstract}

\section{Introduction}
A \textit{graph homomorphism} is an edge-preserving vertex mapping of a graph $G$ to a graph $H$. It is also known as an \textit{$H$-coloring} of $G$ and the notion was introduced as a generalization of coloring~\cite{maurer1981complexity}. It allows us to unify certain important constraint satisfaction problems, especially related to scheduling and frequency assignments, which are otherwise
expressed as various coloring and labeling problems on graphs~\cite{hell2004graphs}.
Thus, the notion of graph homomorphism manages to capture a wide range of important applications in an uniform setup. When viewed as an operation on the set of all graphs, it induces rich algebraic structures: a quasi order (and a partial order), a lattice, and a category~\cite{hell2004graphs}.

The study of graph homomorphisms can be classified into three major branches:
\begin{enumerate}[(i)]
\item  The study of various 
application-motivated optimization problems are
modeled using graph homomorphisms. These usually involve finding  an $H$ having certain prescribed properties such that every member of a graph family $\mathcal{F}$ is $H$-colorable~\cite{gutin2006level,hell1990complexity,sopena1997chromatic,zhu1996uniquely}.

\item The study of the algorithmic aspects of the $H$-coloring problem, including characterizing its dichotomy, and finding exact (polynomial) approximation or parameterized algorithms for the hard problems~\cite{brewster2017complexity,feder2003acyclic,foucaud2022graph}.

\item The study of the algebraic structures that arise from the notion of graph homomorphisms~\cite{hell2004graphs}.
\end{enumerate}

Unsurprisingly, these three areas of research have interdependencies and connections. The notion of graph homomorphisms, initially introduced for undirected and directed graphs, later got extended to hypergraphs~\cite{hahn1997graph,nevsetvril1999homomorphism}, $2$-edge-colored graphs~\cite{alon1998homomorphisms}, $k$-edge-colored graphs~\cite{guspiel2017universal} and $(n,m)$-graphs~\cite{nevsetvril2000colored}. 
These graphs, due to their various types of adjacencies, manage to capture complex relational structures and are useful for mathematical modeling. For instance, the Query Evaluation Problem (QEP) in graph database (the immensely popular databases that are now used to handle highly interrelated data in social networks like Facebook, Twitter, etc.),
is modeled on homomorphisms of $(n,m)$-graphs~\cite{angles2017foundations, beaudou2019complexity}.

From a theoretical point of view, apart from being the generalization of the 
well-studied graph homomorphisms of oriented, signed, $2$-edge-colored, and $k$-edge-colored graphs, homomorphisms of $(n,m)$-graphs are known to have connections with some topics in graph theory as well as other mathematical disciplines. On the one hand, they relate to graph theoretic notions such as 
harmonious coloring~\cite{alon1998homomorphisms},
flows~\cite{BORODIN2004147}, 
and 
acyclic coloring~\cite{nevsetvril2000colored}.
On the other hand, they also have connections with 
the study of Coxeter 
groups~\cite{alon1998homomorphisms}, and binary predicate logic~\cite{nevsetvril2000colored}. 
It is worth mentioning that in a work by 
Borodin, Kim, Kostochka, West~\cite{BORODIN2004147}, the then best approximation of 
Jaeger's conjecture for planar graphs was established 
by showing that 
every planar graph with girth 
at least $\frac{20t-2}{3}$ has circular chromatic number at most $2+\frac{1}{t}$. 
However, this follows as a corollary of 
the main result of the paper~\cite{BORODIN2004147} which is a theorem on homomorphisms of $(n,m)$-graphs.

Thus, indeed the study of homomorphisms of $(n,m)$-graphs is
a significant area of research. 
However, as there are not many known 
well-structured $(n,m)$-graphs, the study of its homomorphisms becomes extremely difficult. In contrast, there are known well-structured oriented, and signed graphs such as the Paley tournaments, signed Paley graphs, and Tromp constructions~\cite{DBLP:journals/jgt/Marshall07, ochem2014homomorphisms}. 
In an other work~\cite{lahiri2021chromatic}, we have implemented the theory from this work (related to the switch operation) to construct some well-structured $(0,3)$ and $(1,1)$-graphs and used them to prove upper bounds for $(n,m)$-chromatic number of partial $2$-trees.  

In recent times, researchers have started further extending the graph homomorphism studies by exploring the effect of switch operation on homomorphisms. 
Notably, homomorphisms of 
signed graphs, which are essentially obtained by observing the effect of the switch operation on $2$-edge-colored graphs, has gained immense popularity~\cite{naserasr2015homomorphisms, naserasr2021homomorphisms,ochem2014homomorphisms,sopena2013homomorphisms} due to its strong connection with graph minor theory. 
Also, graph homomorphism with respect to some other switch-like operations, such as, 
push operation on oriented graphs~\cite{klostermeyer2004homomorphisms}, 
cyclic switch on $k$-edge-colored graphs~\cite{brewster2009edge}, 
and switching $(n,m)$-graphs with respect to Abelian groups of special type (which does not allow switching an edge to an arc or vice versa)~\cite{brewster20222,leclerc2021switching} and switching $(n,m)$-graphs with respect to non-Abelian group~\cite{duffy2022switching} to list a few, has been recently studied.  Naturally, all three main branches of research listed above in the context of graph homomorphism are also explored for the above-mentioned extensions and variants. However, in comparison, the global algebraic structure is a less explored branch for the extensions. 

\medskip

\noindent \textbf{Organization:}  In this article, we introduce a homomorphism with respect to a generalized switch 
operation (using a group $\Gamma$) on $(n,m)$-graphs which, in particular, allows arcs to become edges and vice-versa. 
 We call it 
a $\Gamma$-homomorphism whose detailed definition is deferred to the next section. 
In particular, 
it is possible to view the set of all $(n,m)$-graphs as a category with $\Gamma$-homomorphism 
playing the role of morphism. However, in this article, we have refrained from using the language of category theory as much as possible, and have used the language of graph theory instead.

\begin{itemize}
    \item In Section \ref{Sec hom} we introduce the  notion of $\Gamma$-homomorphisms of $(n,m)$-graphs  and prove some basic results. In particular, we show that
the switch operation defined by us is (strictly) more general than the switch operation defined by 
Leclerc, MacGillivray, and Warren~\cite{leclerc2021switching}. 

\item  In Section \ref{Sec alge prop} we study algebraic 
properties of $\Gamma$-homomorphisms and explore 
their relation with  $(n,m)$-homomorphisms. 
In particular, we solve a generalized version of an open problem due to Klostermeyer and MacGillivray~\cite{klostermeyer2004homomorphisms}. Also, we show that the important notion of ``core'' is well-defined in this setup (category).

\item In Section \ref{sec category} we establish the existence of a categorical product and a co-product of $(n,m)$-graphs under $\Gamma$-homomorphism. Furthermore, we prove some fundamental properties of these categorical products and co-products.  
Interestingly, 
given two $(n,m)$-graphs $G$ and $H$ of order $p$ and $q$, respectively, their categorical product is a graph on $|\Gamma|pq$  vertices. 

\item In Section \ref{sec chromatic number} we define and study the $\Gamma$-chromatic number for the family of forests, where the $\Gamma$-chromatic number is defined using $\Gamma$-homomorphism. This result generalizes Theorem~1.1 of~\cite{nevsetvril2000colored}.

\item In Section \ref{sec conclusion} we conclude our work and propose possible future directions for research on this topic. 
\end{itemize}

\section{Homomorphisms of $(n,m)$-graphs and generalized switch} \label{Sec hom}
Throughout this article, we follow the standard graph theoretic, algebraic and category theory notions from West~\cite{west2001introduction},  Artin~\cite{Artin:1998}, and Jacobson~\cite{jacobson2012basic}, respectively.

  An \textit{$(n,m)$-graph} $G$ is a graph with vertex set $V(G)$, arc set $A(G)$ and edge set $E(G)$, where each arc is colored with one of the $n$ colors from $\{2, 4,  \ldots, 2n\}$ and each edge is colored with one of the $m$ colors from $\{2n+1, 2n+2, \ldots, 2n+m\}$. See Figure~\ref{Fig: (n,m)-graph}\textcolor{red}{(a)} for an example of a $(2,2)$-graph.
In particular, if there is an arc of color $i$ from $u$ to $v$, we, equivalently, view it as a \textit{reverse arc} from $v$ to $u$ of color $(i-1)$. In this scenario, 
we say that $v$ is an $i$-neighbor of $u$, or equivalently, 
$u$ is an $(i-1)$-neighbor of $v$. Furthermore, if there is an edge of color $j$ between $u$ and $v$, then we say that $u$ (resp., $v$) is a $j$-neighbor of $v$ (resp., $u$). Figure~\ref{Fig: (n,m)-graph}\textcolor{red}{(b)} depicts all possible types of adjacencies of the vertex $u$ in a $(2,2)$-graph.
For convenience, we use the following convention throughout this article: if $u$ is an $i$-neighbor of $v$, then we say $v$ is an $\overline{i}$-neighbor of $u$. In particular, $\overline{\overline{i}} = i$.

\begin{figure}[htp]
\begin{tabularx}{\textwidth}{X  X}
\centering
\begin{tikzpicture}[scale = 0.8, inner sep=0.5mm, auto=center]	
		\begin{scope}[very thick,decoration={markings,
									mark=at position 0.5 with {\arrow{>}}}
								] 

    \node[circle,draw] (a) at (0,0) {};
    \node[circle,draw] (b) at (0,3) {};
    \node[circle,draw] (c) at (1,1.5) {};
    \node[circle,draw] (d) at (2,0) {};
    \node[circle,draw] (e) at (2,3) {};
    \node[circle,draw] (f) at (3,1.5) {};
    \node[circle,draw] (g) at (4,0) {};
    \node[circle,draw] (h) at (4,3) {};

    \draw[red] (a)to(b);
    \draw[-latex,line width= 1pt] (b) to (c);
    \draw[blue] (c)to(a);
    \draw[-latex,line width = 1pt] (f)to(g);
     \draw[-latex,line width = 1pt] (f)to(c);
     \draw[-latex,line width = 1pt] (g)to(h);
     \draw[-latex,line width = 1pt] (h)to(f);
    \draw[-latex, green, line width = 1pt] (e)to(c);
    \draw[-latex, green, line width = 1pt] (d)to(c);
    \draw[blue] (e)to(f)to(d);
    	
                            \end{scope}

\end{tikzpicture}

 \label{Fig: mixed graph}

&
\centering
\begin{tikzpicture}[scale=0.5, inner sep=0.5mm, auto=center]	
		\begin{scope}[very thick,decoration={markings,
									mark=at position 0.5 with {\arrow{>}}}
								] 
    	\node[circle,draw] (c1) at (0,0)  {u};
							\node[circle,draw] (c2) at (0.5*360/6:4cm)  {};
							\node[circle,draw] (c3) at (1.1*360/6:4cm)  {};
							\node[circle,draw] (c4) at (1.9*360/6:4cm)  {};
							\node[circle,draw] (c5) at (2.5*360/6:4cm)  {};
							\node[circle,draw] (c6) at (3*360/6:4cm)  {};
							\node[circle,draw] (c7) at (6*360/6:4cm)  {};
							\draw[-latex,line width = 1pt] (c1) -- (c2);
                            \node at (0.15*360/6:2cm) {\scriptsize 1};
                            \node at (0.65*360/6:2cm) {\scriptsize 2};
                             \node at (1.25*360/6:2cm) {\scriptsize 3};
                              \node at (1.75*360/6:2cm) {\scriptsize 4};
                               \node at (2.35*360/6:2cm) {\scriptsize 5};
                                \node at (2.85*360/6:2cm) {\scriptsize 6};
							\draw[-latex, green, line width = 1pt] (c3) -- (c1);
							\draw[-latex, green, line width = 1pt] (c1) -- (c4);
							\draw[blue] (c1) -- (c5);
							\draw[red] (c6) -- (c1);
							\draw[-latex,line width = 1pt] (c7) -- (c1);
                            \end{scope}
\end{tikzpicture}

 \label{Fig: mixed graph2}
\end{tabularx}
\caption{(a) An example of a $(2,2)$-graph $G$.
(b) The $2n+m = 6$ possible types of adjacencies for a vertex $u$ in a $(2,2)$-graph.}
\label{Fig: (n,m)-graph}
\end{figure}

Let $\Gamma \subseteq S_{2n+m}$, where $S_{2n+m}$ is the permutation group on 
$A_{n,m} = \{1, 2, \ldots, 2n, 2n+1, \ldots, 2n+m\}$. A \textit{$\sigma$-switch} at a vertex $v$ of an $(n,m)$-graph is to change the incident arcs and edges of $v$ in such a way that its $t$-neighbors become $\sigma(t)$-neighbors for all $t \in A_{n,m}$ where $\sigma \in \Gamma$. 
To \textit{$\Gamma$-switch}  a vertex $v$ of an $(n,m)$-graph is to 
apply a $\sigma$-switch on $v$ for some $\sigma \in \Gamma$. An $(n,m)$-graph $G'$ obtained by a sequence of $\Gamma$-switches performed on the vertices of $G$ is a \textit{$\Gamma$-equivalent} graph of $G$. For an example, consider the $\Gamma = \langle \sigma \rangle$, a subgroup of $S_6$ that is generated by $
  \sigma =
  \left(
  \begin{matrix}
    1 & 2 & 3 & 4 & 5 & 6 
  \end{matrix}
  \right)$. Figure~\ref{Fig: gamma equivalent graph} is a $(1,2)$-graph $G'$ that is $\Gamma$-equivalent graph of $G$ from the example shown in Figure~\ref{Fig: (n,m)-graph}\textcolor{red}{(a)} which is obtained after switching the highlighted vertices using $\sigma$-switch. 
  Notice that, the switch operation, the way it is defined, may freely convert an arc (resp., reverse arc, edge) of any color to an arc, reverse arc, or an edge of any color.  
  Thus, the set of arcs and edges of $G$ and $G'$ may differ even if $G$ and $G'$ are $\Gamma$-equivalent.

  \begin{figure}[htp]
\begin{tabularx}{\textwidth}{X  X}
\centering
\begin{tikzpicture}[scale = 0.8, inner sep=0.5mm, auto=center]	
		\begin{scope}[very thick,decoration={markings,
									mark=at position 0.5 with {\arrow{>}}}
								] 

    \node[circle,draw] (a) at (0,0) {};
    \node[circle,draw] (b) at (0,3) {};
    \node[circle,draw] (c) at (1,1.5) {};
    \filldraw[orange!40] (2,0) circle (12pt);
    \node[circle,draw] (d) at (2,0) {};
    \filldraw[orange!40] (2,3) circle (12pt);
    \node[circle,draw] (e) at (2,3) {};
    
    \node[circle,draw] (f) at (3,1.5) {};
    \node[circle,draw] (g) at (4,0) {};
    \node[circle,draw] (h) at (4,3) {};

    \draw[red] (a)to(b);
    \draw[-latex,line width= 1pt] (b) to (c);
    \draw[blue] (c)to(a);
    \draw[-latex,line width = 1pt] (f)to(g);
     \draw[-latex,line width = 1pt] (f)to(c);
     \draw[-latex,line width = 1pt] (g)to(h);
     \draw[-latex,line width = 1pt] (h)to(f);
    \draw[-latex, green, line width = 1pt] (e)to(c);
    \draw[-latex, green, line width = 1pt] (d)to(c);
    \draw[blue] (e)to(f)to(d);
    	
                            \end{scope}

\end{tikzpicture}

&
\centering

\begin{tikzpicture}[scale = 0.8, inner sep=0.5mm, auto=center]	
		\begin{scope}[very thick,decoration={markings,
									mark=at position 0.5 with {\arrow{>}}}
								] 

    \node[circle,draw] (a) at (0,0) {};
    \node[circle,draw] (b) at (0,3) {};
    \node[circle,draw] (c) at (1,1.5) {};
    \node[circle,draw] (d) at (2,0) {};
    \node[circle,draw] (e) at (2,3) {};
    \node[circle,draw] (f) at (3,1.5) {};
    \node[circle,draw] (g) at (4,0) {};
    \node[circle,draw] (h) at (4,3) {};

    \draw[red] (a)to(b);
    \draw[-latex,line width= 1pt] (b) to (c);
    \draw[blue] (c)to(a);
    \draw[-latex,line width = 1pt] (f)to(g);
     \draw[-latex,line width = 1pt] (f)to(c);
     \draw[-latex,line width = 1pt] (g)to(h);
     \draw[-latex,line width = 1pt] (h)to(f);
    \draw[blue] (e)to(c);
    \draw[blue] (d)to(c);
    \draw[red] (e)to(f)to(d);
    	
                            \end{scope}

\end{tikzpicture}
\end{tabularx}
\caption{A $(2,2)$-graph $G$ and its $\Gamma$-equivalent graph $G'$.}
\label{Fig: gamma equivalent graph}
\end{figure}

In the very first work on $(n,m)$-graphs, Ne\v{s}et\v{r}il and Raspaud~\cite{nevsetvril2000colored} in $2000$, extended the notion of graph homomorphisms to homomorphisms of $(n,m)$-graphs\footnote{In their work, $(n,m)$-graphs were termed as colored mixed graphs and $(n,m)$-homomorphisms of $(n,m)$-graphs as colored homomorphism of colored mixed graphs. Also, in~\cite{BORODIN2004147}, $(n,m)$-graphs are referred as $s$-graphs.}, this generalization for particular cases implies the study related to homomorphisms of oriented, signed and $k$-edge colored graphs~\cite{alon1998homomorphisms, brewster2009edge, naserasr2015homomorphisms,naserasr2021homomorphisms, sopena1997chromatic, sopena2016homomorphisms}. 

\begin{defn}\label{def hom}
	Let $G$ and $H$ be two $(n,m)$-graphs. An $(n,m)$-homomorphism of $G$ to $H$ is a vertex mapping $\phi\colon V(G) \to V(H)$ satisfying the following: for any $u,v \in V(G),$  if $u$ is a $t$-neighbor of $v$, then $f(u)$ is a $t$-neighbor of $f(v)$ in $H$, where $t \in A_{n,m}$.
\end{defn}

We now extend the Definition \ref{def hom} to homomorphisms of $(n,m)$-graphs with $\Gamma$-switch. 

\begin{defn}

A \textit{$\Gamma$-homomorphism} of  $G$ to $H$ is a function $f\colon V(G) \to V(H)$ such that there exists a  $\Gamma$-equivalent graph $G'$ of $G$ satisfying the following: for any $u,v \in V(G) = V(G')$, if $u$ is a $t$-neighbor of $v$ in $G'$ 
then $f(u)$ is a $t$-neighbor of $f(v)$ in $H$, where $t \in A_{n,m}$. We denote this by $G \xrightarrow{\Gamma} H$.

\end{defn}

 \begin{figure}[htp]
\begin{tabularx}{\textwidth}{X  X}
\centering
\begin{tikzpicture}[scale = 0.6, inner sep=0.5mm, auto=center]	
		\begin{scope}[very thick,decoration={markings,
									mark=at position 0.5 with {\arrow{>}}}
								] 

    \node[circle,draw] (a) at (0,0) {};
    \node[circle,draw] (b) at (0,3) {};
    \node[circle,draw] (c) at (1,1.5) {};
   \filldraw[orange!40] (2,0) circle (12pt);
    \node[circle,draw] (d) at (2,0) {};
   \filldraw[orange!40] (2,3) circle (12pt);
    \node[circle,draw] (e) at (2,3) {};
    \node at (2,-1) {$G$};
    
    \node[circle,draw] (f) at (3,1.5) {};
    \node[circle,draw] (g) at (4,0) {};
    \node[circle,draw] (h) at (4,3) {};

    \draw[red] (a)to(b);
    \draw[-latex,line width= 1pt] (b) to (c);
    \draw[blue] (c)to(a);
    \draw[-latex,line width = 1pt] (f)to(g);
     \draw[-latex,line width = 1pt] (f)to(c);
     \draw[-latex,line width = 1pt] (g)to(h);
     \draw[-latex,line width = 1pt] (h)to(f);
    \draw[-latex, green, line width = 1pt] (e)to(c);
    \draw[-latex, green, line width = 1pt] (d)to(c);
    \draw[blue] (e)to(f)to(d);
    	
                            \end{scope}

\end{tikzpicture}

&

\centering

\begin{tikzpicture}[scale = 0.6, inner sep=0.5mm, auto=center]	
		\begin{scope}[very thick,decoration={markings,
									mark=at position 0.5 with {\arrow{>}}}
								]

    \node[circle,draw] (a) at (0,0) {};
    \node at (0,-0.5) {a};
    \node[circle,draw] (b) at (0,3) {};
    \node at (0,3.5) {b};
    \node[circle,draw] (g) at (4,0) {};
     \node at (4,3.5) {c};
    \node[circle,draw] (h) at (4,3) {};
    \node at (4,-0.5) {d};
    \node at (2,-1) {$H$};
    \draw[red] (h)to(b);
    \draw[-latex,line width= 1pt] (a) to (b);
    \draw[blue] (h)to(g);
    \draw[-latex,line width = 1pt] (g)to(a);
     \draw[-latex,line width = 1pt] (b)to(g);

                            \end{scope}

\end{tikzpicture}
\end{tabularx}
\vspace{0.2cm}
\centering

\begin{tikzpicture}[scale = 0.6, inner sep=0.5mm, auto=center]	
		\begin{scope}[very thick,decoration={markings,
									mark=at position 0.5 with {\arrow{>}}}
								] 
        \draw [{Latex[length=1.5mm]}-] (-1,1.5)  to [bend left=30] node [below, sloped] (-2.7,2) {$\sigma$-switch} (-3,4.5);                        

    \node[circle,draw] (a) at (0,0) {};
     \node at (0,-0.5) {c};
    \node[circle,draw] (b) at (0,3) {};
      \node at (0,3.5) {b};
    \node[circle,draw] (c) at (1,1.5) {};
      \node at (1,2.15) {d};
    \node[circle,draw] (d) at (2,0) {};
    \node at (2,-0.5) {c};
    \node[circle,draw] (e) at (2,3) {};
     \node at (2,3.5) {c};
    \node[circle,draw] (f) at (3,1.5) {};
     \node at (3,2.15) {b};
    \node[circle,draw] (g) at (4,0) {};
     \node at (4,-0.5) {d};
    \node[circle,draw] (h) at (4,3) {};
     \node at (4,3.5) {a};
     \node at (2,-1.5) {$G'$};

    \draw[red] (a)to(b);
    \draw[-latex,line width= 1pt] (b) to (c);
    \draw[blue] (c)to(a);
    \draw[-latex,line width = 1pt] (f)to(g);
     \draw[-latex,line width = 1pt] (f)to(c);
     \draw[-latex,line width = 1pt] (g)to(h);
     \draw[-latex,line width = 1pt] (h)to(f);
    \draw[blue] (e)to(c);
    \draw[blue] (d)to(c);
    \draw[red] (e)to(f)to(d);
    \draw [{Latex[length=1.5mm]}-] (8,4.5)  to [bend left=30] node [below, sloped] (7.7,2) {\small $\langle e \rangle$-homomorphism} (6,1.5); 
    	
                            \end{scope}

\end{tikzpicture}

\caption{An example of a $\Gamma$ homomorphism: $G \xrightarrow{\Gamma} H$.}
\label{Fig: gamma homomorphism}
\end{figure}

A \textit{$\Gamma$-isomorphism} 
of $G$ to $H$ is a bijective $\Gamma$-homomorphism whose inverse is also a $\Gamma$-homomorphism.  We denote this by $G \equiv_{\Gamma} H$. Observe that if $\Gamma = \langle e \rangle$ is the singleton group with the identity element $e$, then our $\Gamma$-homomorphism definition becomes the same as homomorphism of $(n,m)$-graphs.

 A related work~\cite{leclerc2021switching} on homomorphism with respect to a switch operation on $(n,m)$-graphs has been studied in which, an Abelian subgroup of $S_m \otimes (S_2 \wr S_n)$ acts on the vertices of $(n,m)$-graph such that the incident edges switch color with edges and the incident arcs switch color with arcs. Formally, let $\phi \in S_m, \psi \in S_n, $ and $\pi = (p_1, p_2, \ldots ,p_n) \in (\mathbb{Z}_2)^{n}$, for an ordered triple $\gamma = (\phi, \psi, \pi),$ an \textit{LMW-switch}\footnote{No particular name was given for this switch in~\cite{leclerc2021switching}. We use the initials of the author names for convenience.}~\cite{leclerc2021switching} at a vertex $v$ is said to be the process of transforming $G$ into an $(n,m)$-graph $G^{v,\gamma}$ where edges with color $i$ incident to $v$  changes to edges with color $\phi(i)$, arcs of color $j$ incident to $v$, changes to arcs of color $\psi(j)$ with the orientation reversed if and only if $p_j = 1$ in $\pi$. This definition is a natural extension to the definitions given in \cite{brewster2009edge, klostermeyer2004homomorphisms} for switching or pushing in the case of $(n,m) = \{(1,0), (0,m)\}$ graphs. In this paper, we study a more generalized switch operation on $(n,m)$ graphs which also captures \textit{LMW-switch} in particular.  
There have also been studies of $\Gamma$ switch-homomorphisms of $(n,m)$-graphs when $\Gamma$ is non-Abelian~\cite{kidner2021gamma}. Here, we restrict ourselves to Abelian subgroups $\Gamma$ of $S_{2n+m}$ unless otherwise stated.

\medskip

Let $u,v$ be any two adjacent vertices of an $(n,m)$-graph $G$. A \textit{$(n,m)$-switch-commutative} group $\Gamma \subseteq S_{2n+m}$ is such that 
for any $\sigma, \sigma' \in \Gamma$, by performing $\sigma$-switch on $u$ and $\sigma'$-switch on $v$, the adjacency between $u$ and $v$ changes in the 
same way irrespective of the order of the switches. 
Observe that, this property not only depends on the group but also depends on the values of $n$ and $m$. Thus, in the definition we have included the term $(n,m)$ as well. However, 
for convenience, whenever $(n,m)$ is clear from the context, we will use the term 
\textit{switch-commutative} instead of $(n,m)$-switch-commutative. In particular, whenever we are in the context of a switch-commutative group, the sequence in which we apply the switches becomes redundant. 
A $\Gamma$-switch, where $\Gamma$ is a switch-commutative group (in the context), in general, is called a \textit{commutative switch}.

\begin{theorem}~\label{Thm: lmw switch}
	Every \textit{LMW-switch} operation is a \textit{commutative switch} operation. Moreover, there exists infinitely many commutative switches
 which are not LMW-switches. 
\end{theorem}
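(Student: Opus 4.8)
The plan is to first recast switch-commutativity as a purely group-theoretic condition and then settle both halves of the statement. Let $b\in S_{2n+m}$ denote the \emph{bar} involution on $A_{n,m}$, i.e.\ $b=\prod_{j=1}^{n}(2j-1\ \ 2j)$, which fixes every edge label $2n+1,\dots,2n+m$. Given two adjacent vertices $u,v$ with $u$ a $t$-neighbour of $v$, I would compute the label carried by $u$ as seen from $v$ after performing a $\sigma$-switch at $u$ followed by a $\sigma'$-switch at $v$, and then after the two switches in the opposite order. Since performing a switch at one endpoint of $uv$ turns the label read off at that endpoint into its bar, a short bookkeeping argument gives the two outcomes $(\sigma'\,b\,\sigma\,b)(t)$ and $(b\,\sigma\,b\,\sigma')(t)$. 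Hence $\Gamma$ is switch-commutative if and only if every $\sigma'\in\Gamma$ commutes with every $b\sigma b$ for $\sigma\in\Gamma$; equivalently $[\Gamma,\,b\Gamma b]=\{e\}$. (In particular this depends on $n$ and $m$ through $b$, as already noted in the excerpt.)

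For the first assertion, note that a triple $\gamma=(\phi,\psi,\pi)$ defining an LMW-switch induces a permutation $\sigma_\gamma\in S_{2n+m}$, and that the collection of all such $\sigma_\gamma$ is exactly the centraliser $C_{S_{2n+m}}(b)=(S_2\wr S_n)\times S_m$: a permutation of $A_{n,m}$ commutes with $b$ precisely when it maps the set of edge labels to itself and permutes the $n$ arc-pairs $\{1,2\},\{3,4\},\dots$ among themselves (flipping each or not), which is exactly what an LMW-switch is permitted to do. Since, by definition, an LMW-switch operation is a switch by an Abelian subgroup $\Gamma\le C_{S_{2n+m}}(b)$, we get $b\sigma b=\sigma$ for all $\sigma\in\Gamma$, so $[\Gamma,b\Gamma b]=[\Gamma,\Gamma]=\{e\}$ and $\Gamma$ is switch-commutative. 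This also shows that inside $C(b)$ the properties ``switch-commutative'' and ``Abelian'' coincide, so to prove the second assertion it suffices to produce switch-commutative groups $\Gamma\not\subseteq C(b)$ --- equivalently, $\Gamma$-switches that can turn an arc into an edge, something no LMW-switch ever does.

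For the second assertion I would exhibit an explicit infinite family. For each integer $k\ge 1$ consider the class of $(1,k)$-graphs, so $A_{1,k}=\{1,2\}\cup\{3,\dots,k+2\}$ and $b=(1\ 2)$. Put $\Gamma_k=\langle(1\ 2\ 3)\rangle\cong\mathbb{Z}_3\le S_{k+2}$. Because $b$ fixes $3$, one has $b(1\ 2\ 3)b=(2\ 1\ 3)=(1\ 2\ 3)^{-1}$, so for $\sigma=(1\ 2\ 3)$ we get $b\sigma^{j}b=\sigma^{-j}$, which commutes with every power $\sigma^{i}$; hence $[\Gamma_k,b\Gamma_k b]=\{e\}$ and the $\Gamma_k$-switch is a commutative switch. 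On the other hand the $(1\ 2\ 3)$-switch sends a $2$-neighbour (head of an out-arc) to a $3$-neighbour (endpoint of an edge), i.e.\ it turns an arc into an edge; equivalently $(1\ 2\ 3)\notin C_{S_{k+2}}(b)=S_2\times S_k$. Thus $\Gamma_k$ is not an LMW-switch, and as $k$ ranges over the positive integers we obtain infinitely many commutative switches, on infinitely many distinct graph classes, none of which is an LMW-switch.

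The step that deserves the most care, and the one I expect to be the main obstacle, is the reduction in the first paragraph: one must be scrupulous about which endpoint each permutation is applied at, and about the bar-conjugation that appears because the adjacency $uv$ is labelled differently from $u$ and from $v$. After that everything is routine --- identifying the LMW-switch permutations with $C(b)$ is a standard centraliser computation, and verifying the infinite family reduces to the single identity $b(1\,2\,3)b=(1\,2\,3)^{-1}$ together with the observation that a $3$-cycle through an arc-pair and an edge label cannot lie in $C(b)$.
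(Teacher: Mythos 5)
Your proof is correct, and for the first assertion it takes a genuinely more structural route than the paper. The paper verifies switch-commutativity of LMW-switches directly by a case analysis: for an edge the colour change is governed by the commuting maps $\phi_u,\phi_v$, and for an arc the colour change is governed by $\psi_u,\psi_v$ while the direction change is governed by the symmetric product $\pi_u(t)\cdot\pi_v(t)$. You instead encode the head/tail convention in the involution $b=\prod_{j=1}^{n}(2j-1\ \ 2j)$ and show that the two orders of switching act on the label read at $v$ as $\sigma'\,b\,\sigma\,b$ and $b\,\sigma\,b\,\sigma'$, so that switch-commutativity is exactly the condition $[\Gamma,\,b\Gamma b]=\{e\}$; this is precisely the identity the paper later isolates as Lemma~\ref{lem convention}, so your bookkeeping is consistent with the source. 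Identifying the permutations induced by LMW triples with the centraliser $C_{S_{2n+m}}(b)=(S_2\wr S_n)\times S_m$ then makes the first assertion immediate (since $b\sigma b=\sigma$ there), and it buys something the paper does not state: an exact criterion, namely that a switch is commutative but not LMW if and only if $[\Gamma,b\Gamma b]=\{e\}$ and $\Gamma\not\leq C(b)$. For the second assertion you and the paper use the same seed example, the $3$-cycle $(1\,2\,3)$ mixing an arc pair with an edge label; the paper generates infinitely many instances by taking $C_3\oplus\cdots\oplus C_3$ on $(n,n)$-graphs, while you fix $\Gamma=\langle(1\,2\,3)\rangle$ and vary $m$ over $(1,k)$-graphs, checking commutativity via $b(1\,2\,3)b=(1\,2\,3)^{-1}$. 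Both families are valid, and your verification of the example is in fact more rigorous than the paper's one-line assertion that the two orders of switching agree.
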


\begin{proof}
	Let $G$ be an $(n,m)$-graph and let $u,v \in V(G)$ such that $v$ is a $t$-neighbor of $u$. Consider an Abelian subgroup  $\Gamma \subseteq S_m \otimes (S_2 \wr S_n)$. Let $\sigma_u$ and $\sigma_v$ be two LMW-switches of $\Gamma$ acting on vertices $u$ and $v$ respectively. By definition, $\sigma_u(t) = (\phi_u(t), \psi_u(t), \pi_u(t))$ is applied on $u$ and $\sigma_v(t) = (\phi_v(t), \psi_v(t), \pi_v(t))$ is applied on $v$. It is enough to prove that this LMW-switch action is a commutative switch operation. That is, to show that, the adjacency between $u$ and $v$ changes in the same way irrespective of the order in which the switches $\sigma_u$ and $\sigma_v$ are applied on $u$ and $v$, respectively. 

  Suppose $t$ is the color of an edge. As $\Gamma$ is Abelian, $\sigma_u(\sigma_v(t)) = \sigma_v(\sigma_u(t))$. As $\phi_u$ and $\phi_v$ are responsible for changing the color of the edge, which is a symmetric relation between $u$ and $v$, they commute.

 Suppose $t$ is the color of an arc. Then after applying the switches, the color of the arc changes as per the functions $\psi_u$ and $\psi_v$ and it does not matter in which order the switches are applied as the color of the arc (not the direction) is 
 a symmetric relation between $u$ and $v$, and as $\Gamma$ is commutative. On the other hand, the change in the direction of the arc is determined by $\pi_u(t) \cdot \pi_v(t)$. 
 To be precise, if  $\pi_u(t) \cdot \pi_v(t) = 0$ then the direction does not change, and if
$\pi_u(t) \cdot \pi_v(t) = 1$ then the direction changes. 
This completes the first part of the proof. 

\medskip

To prove  the moreover part, we give an example of a commutative switch which is not an $LMW$-switch. Consider $(1,1)$-graph $G$. Let $C_3 = \{ e, \sigma, \sigma^{2} \} \subseteq S_{3}$, where, 
	 $\sigma =
  \left(
  \begin{matrix}
    1 & 2 & 3  
  \end{matrix}
  \right)$.
For two adjacent vertices say $u,v \in V(G)$, $\sigma$ applied on $u$ and then $\sigma$ applied on $v$ yields the same result as that of $\sigma$ applied on $v$ first and then on $u$. Thus, $\Gamma$ is a commutative switch whereas it is clearly not a \textit{LMW-switch} as an arc (color $2$) is switched to edge (color $3$) under this operation. Further, we can extend this example to $(n,n)$-graph, for any $n \in \mathbb{N}$ and $\Gamma \subsetneq S_{3n}$, where $\Gamma = C_{3} \oplus C_3 \oplus C_3 \cdots \oplus  C_3$, we have $n$-types of arc and $n$-types of edges, where $i$-th $C_3$ is $\{ e, \sigma_i, \sigma_i^{2} \}$,  $\sigma_i =
  \left(
  \begin{matrix}
    2i-1 & 2i & 3i  
  \end{matrix}
  \right)$, respectively.  
\end{proof}

\section{Basic algebraic properties}\label{Sec alge prop}
Let $\Gamma \subseteq S_{2n+m}$ be a
switch-commutative group and let $G$ be an $(n,m)$-graph with set of vertices $\{v_1, v_2, \ldots,$ $ v_k\}$. We construct the $(n,m)$-graph $G^*$ of $G$ with respect to $\Gamma$ as follows: take $|\Gamma|$ many copies of $G$, indexed by the elements of $\Gamma$. That is, for each $\sigma \in \Gamma$, its corresponding copy of $G$
is denoted as $G^{\sigma}$. The vertex corresponding to $v_i \in V(G)$ in $G^{\sigma}$ is denoted as $v_i^{\sigma}$. A vertex $v_i^{\sigma}$ is a $t$-neighbor of $v_j^{\sigma'}$ in $G^*$ if and only if $v_i$ is a $t$-neighbor of $v_j$ in $G$ where $i,j \in \{1, 2, \ldots, k\}$ and $\sigma, \sigma' \in \Gamma$. We now define the \textit{$\Gamma$-switched graph} denoted by 
$\rho_{\Gamma}(G)$ of $G$. The $\Gamma$-switched graph $\rho_{\Gamma}(G)$ on $(|\Gamma| \times |V(G)|)$ vertices is obtained from $G^{*}$ by performing $\sigma$-switch on all the vertices of $G^{\sigma}$ for all $\sigma \in \Gamma$. 

Given an $(n,m)$-graph $G$ and two vertices $u,v \in V(G)$ such that $v$ is a $t$-neighbor of $u$, in $\rho_{\Gamma}(G)$, by definition, $v^{\sigma'}$ is either a $\overline{\sigma'(\overline{\sigma(t)})}$-neighbor of $u^{\sigma}$, or a $\sigma(\overline{\sigma'(\overline{t})})$-neighbor of $u^{\sigma}$. 
Observe that, $\overline{\sigma'(\overline{\sigma(t)})}$ and 
$\sigma(\overline{\sigma'(\overline{t})})$ may not be equal unless $\Gamma$ is a switch-commutative group. 
Thus, it is important to assume that 
$\Gamma$ is switch-commutative while defining $\rho_{\Gamma}(G)$.
Note that, the notion of $\rho_{\Gamma}(G)$ is a natural generalization of the notions push-digraph for oriented graphs~\cite{klostermeyer2004homomorphisms} or double switching graphs for signed graphs~\cite{naserasr2021homomorphisms} (in both cases the relevant groups are switch-commutative).

From the example
in Theorem~\ref{Thm: lmw switch}, we have seen that the subgroup 
$\Gamma = \langle \sigma \rangle \subsetneq S_3$, 
where $\sigma =
  \left(
  \begin{matrix}
    1 & 2 & 3  
  \end{matrix}
  \right)$,
  is a $(1,1)$-switch-commutative group. Refer to Figure~\ref{Fig: rhogamma} for the construction of $\rho_{\Gamma}(G)$ for an $(1,1)$-graph $G$. 
  
\begin{figure}
\begin{tabularx}{\textwidth}{X    X}
\centering
\begin{tikzpicture}[auto=centre, scale=0.8, inner sep=0.5mm]		                       
      \draw[fill=orange!20](3, 3) circle (1.8cm);
      \node (a) at (3.25,3.7) {$a^{\sigma^2}$};
       \node (b) at (2.25,2.75) {$b^{\sigma^2}$};
        \node (c) at (3.8,2.9) {$c^{\sigma^2}$};
        \node at (2.95,2.2) {\scriptsize \textcolor{red}{$\sigma^{2}$-switch
        applied}};
        \draw[blue,thick]  (3.5,2.8)to(3,3.5);
        \draw[-latex,line width = 1pt] (2.5,2.8) to (3,3.5);
        \draw[-latex,line width = 1pt] (3.5,2.8) to (2.5,2.8);
        \draw[glow=cyan] (3,1.5) to (-0.5,0.5);
        \draw[glow=green] (-3,1.5) to (0.5,0.5);
        \draw[glow=orange] (-2,3) to (2,3);
      
      \node at (4.2,1.1) {$G^{\sigma^2}$};
       \draw(-3, 3) circle (1.8cm);
       \node (a) at (-3,3.7) {$a$};
       \node (b) at (-2.35,2.8) {$c$};
        \node (c) at (-3.7,2.8) {$b$};
        \node at (-2.9,2.3) {\scriptsize \textcolor{red}{$e$-switch applied}};
        \draw[-latex,line width = 1pt] (-3.5,2.8)to(-3,3.5);
        \draw[blue,thick] (-2.5,2.8)to(-3,3.5);
        \draw[-latex,line width = 1pt] (-2.5,2.8) to (-3.5,2.8) ;
       \node at (-4.2,1.1) {$G$};
        \draw[fill=green!20](0, 0) circle (1.8cm);
        \node (a) at (0.1,0.7) {$a^{\sigma}$};
       \node (b) at (0.7,-0.2) {$c^{\sigma}$};
        \node (c) at (-0.7,-0.2) {$b^{\sigma}$};
        \node at (0,-0.7) {\scriptsize \textcolor{red}{$\sigma$-switch applied}};
        \draw[-latex,line width = 1pt] (-0.5,-0.2)to(0,0.5);
        \draw[blue,thick] (0,0.5)to(0.5,-0.2);
        \draw[-latex,line width = 1pt] (0.5,-0.2)to(-0.5,-0.2);
         \node at (-2.1,-1) {$G^{\sigma}$};
\end{tikzpicture}
 
&
\hspace{2cm}
\begin{tikzpicture}[scale=0.5, inner sep=0.5mm, auto=center]	
		\begin{scope}[very thick,decoration={markings,
									mark=at position 0.5 with {\arrow{>}}}
								] 
                               
	\node[circle,draw,minimum size=0.8cm] (b) at (1*360/9:5cm)  {$b$};
    \node[circle,draw,minimum size=0.8cm] (a) at (2*360/9:5cm)  {$a$};
    \node[circle,draw,minimum size=0.8cm] (c) at (3*360/9:5cm)  {$c$};
    \node[circle,draw,minimum size=0.8cm] (bsig) at (4*360/9:5cm)  {$b^{\sigma}$};
    \node[circle,draw,minimum size=0.8cm] (asig) at (5*360/9:5cm)  {$a^{\sigma}$};
    \node[circle,draw,minimum size=0.8cm] (csig) at (6*360/9:5cm)  {$c^{\sigma}$};
    \node[circle,draw,minimum size=0.8cm] (bsigsq) at (7*360/9:5cm)  {$b^{\sigma^2}$};
    \node[circle,draw,minimum size=0.8cm] (asigsq) at (8*360/9:5cm)  {$a^{\sigma^2}$};
    \node[circle,draw,minimum size=0.8cm] (csigsq) at (9*360/9:5cm)  {$c^{\sigma^2}$};
    \draw[-latex,line width = 1pt] (b)to(a);
    \draw[-latex,line width = 1pt] (c)to(b);
    \draw[-latex,line width = 1pt] (asig)to(b);
    \draw[-latex,line width = 1pt] (b)to(csigsq);
    \draw[-latex,line width = 1pt] (a)to(csig);
    \draw[-latex,line width = 1pt] (a)to(bsigsq);
    \draw[-latex,line width = 1pt] (csigsq)to(a);
    \draw[-latex,line width = 1pt] (c)to(asig);
    \draw[-latex,line width = 1pt] (asigsq)to(c);
    \draw[-latex,line width = 1pt] (bsig)to(c);
    \draw[-latex,line width = 1pt] (asigsq)to(bsig);
    \draw[-latex,line width = 1pt] (bsig)to(asig);
    \draw[-latex,line width = 1pt] (csig)to(bsig);
    \draw[-latex,line width = 1pt] (bsigsq)to(csig);
    \draw[-latex,line width = 1pt] (csig)to(asigsq);
    \draw[-latex,line width = 1pt] (csigsq)to(bsigsq);
    \draw[-latex,line width = 1pt] (bsigsq)to(asigsq);
    \draw[-latex,line width = 1pt] (asig)to(csigsq);
    \draw[blue] (b)to(csig)to(asig)to(bsigsq)to(c)to(a)to(bsig)to(csigsq)to(asigsq)to(b);
    \end{scope}
\end{tikzpicture}

\end{tabularx}

\caption{(a) The process of constructing $\rho_{\Gamma}(G)$ from the $(1,1)$-graph $G$ (also depicted). (b) The graph $\rho_{\Gamma}(G)$ where $\Gamma = \langle (1~2~3) \rangle$. Notice that $(1~2~3)$ is not a LMW-switch, but is a commutative switch.}
\label{Fig: rhogamma}
\end{figure}


\begin{lemma}\label{lem convention}
	Let $u,v$ be two vertices of an $(n,m)$-graph $G$ such that $v$ is a $t$-neighbor of $u$.  Suppose $\sigma_u, \sigma_v$ are applied on $u$ and $v$ respectively. Then $\Gamma$ is a switch-commutative group if any only if we have $\overline{\sigma_v(\overline{\sigma_u(t)})} = \sigma_u(\overline{\sigma_v(\overline{t})})$, for all $\sigma_u, \sigma_v \in \Gamma$ and for all $t \in A_{n,m}.$
\end{lemma}

\begin{proof}
Let $u$ and $v$ be vertices of $G$ such that $v$ is a $t$-neighbor of $u$.  Suppose $\sigma_u$ is applied on $u$. Then, $v$ is a $\sigma_u(t)$-neighbor of $u$. By our convention, we have $u$ is a $\overline{\sigma_u(t)}$-neighbor of $v$. Now we apply $\sigma_v$ on $v$, we get, $u$ is a $\sigma_v(\overline{\sigma_u(t)})$-neighbor of $v$, this implies, $v$ is a $\overline{\sigma_v(\overline{\sigma_u(t)})}$ neighbor of $u$. 
	
	\medskip
	
	Since $v$ is a $t$-neighbor of $u$, we have, $u$ is a $\overline{t}$-neighbor of $v$. Now we apply, $\sigma_v$ first on $v$, therefore, $u$ is a $\sigma_v(\overline{t})$-neighbor of $v$, and this is same as $v$ is a $\overline{\sigma_v(\overline{t})}$-neighbor of $u$, then we apply $\sigma_u$ on $u$, we have, $v$ is a $\sigma_u (\overline{\sigma_v(\overline{t})})$-neighbor of $u$. 
	
	If $\Gamma$ is switch-commutative, we get,
$$\overline{\sigma_v(\overline{\sigma_u(t)})} = \sigma_u(\overline{\sigma_v(\overline{t})})$$
Moreover, if $$\overline{\sigma_v(\overline{\sigma_u(t)})} = \sigma_u(\overline{\sigma_v(\overline{t})})$$
for all $\sigma_u, \sigma_v \in \Gamma$ and for all $t \in A_{n,m}$, then $\Gamma$ is switch-commutative (by definition). 
\end{proof}

\begin{corollary}\label{cor adjacency}
    Let $G$ be an $(n,m)$-graph and let $\Gamma$ be a switch-commutative group. Suppose $v$ is a $t$-neighbor of $u$ in $G$, and $\sigma_u$, $\sigma_v \in \Gamma$ is applied on $u$, $v$ respectively.  Then after the switches,  
    $v$ becomes a $\overline{\sigma_v(\overline{\sigma_u(t)}}$-neighbor of $u$ or equivalently, $v$ is a $\sigma_u(\overline{\sigma_v(\overline{t})})$-neighbor of $u$.
\end{corollary}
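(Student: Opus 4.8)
The plan is to obtain both displayed expressions directly from the bookkeeping already carried out in the proof of Lemma~\ref{lem convention}, so the corollary becomes essentially a restatement of that lemma in the form in which it will be applied later. First I would fix the pair $u,v$ with $v$ a $t$-neighbor of $u$ in $G$ and perform the $\sigma_u$-switch at $u$ first: then $v$ is a $\sigma_u(t)$-neighbor of $u$, hence $u$ is an $\overline{\sigma_u(t)}$-neighbor of $v$ by our convention. Performing the $\sigma_v$-switch at $v$ next turns $u$ into a $\sigma_v(\overline{\sigma_u(t)})$-neighbor of $v$, and applying the convention once more (using $\overline{\overline{i}} = i$) this says exactly that $v$ is an $\overline{\sigma_v(\overline{\sigma_u(t)})}$-neighbor of $u$. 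This yields the first form.

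For the second form I would repeat the same computation in the opposite order: starting again from $v$ being a $t$-neighbor of $u$, so $u$ is a $\overline{t}$-neighbor of $v$, the $\sigma_v$-switch at $v$ makes $u$ a $\sigma_v(\overline{t})$-neighbor of $v$, i.e.\ $v$ an $\overline{\sigma_v(\overline{t})}$-neighbor of $u$, and the subsequent $\sigma_u$-switch at $u$ makes $v$ a $\sigma_u(\overline{\sigma_v(\overline{t})})$-neighbor of $u$. Since a $\sigma$-switch at a vertex only alters the arcs and edges incident to that vertex, the two orders of switching agree everywhere except possibly on the pair $\{u,v\}$, and there the hypothesis that $\Gamma$ is switch-commutative guarantees the adjacency between $u$ and $v$ is the same in both cases; hence $\overline{\sigma_v(\overline{\sigma_u(t)})}$ and $\sigma_u(\overline{\sigma_v(\overline{t})})$ name the same neighbor type of $v$ with respect to $u$. (Equivalently, the equality of the two expressions is precisely the conclusion of Lemma~\ref{lem convention}, so I could instead just cite that lemma for the "equivalently" clause.)

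I do not expect a genuine obstacle here: the statement merely packages the two intermediate lines appearing inside the proof of Lemma~\ref{lem convention}. The one point deserving a word of care is that "switch-commutative" is being invoked only for the concrete pair $u,v$ under the two specific permutations $\sigma_u,\sigma_v$, which is exactly what the definition controls, so no stronger commutation property is required.
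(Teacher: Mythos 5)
Your proposal is correct and follows exactly the route the paper intends: the corollary is an immediate repackaging of the two order-of-switching computations carried out in the proof of Lemma~\ref{lem convention}, with switch-commutativity supplying the equality of the two resulting neighbor types. Nothing further is needed.
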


\begin{theorem}
    The $\Gamma$-switched graph $\rho_{\Gamma}(G)$ is well defined for all $(n,m)$-graphs $G$ if and only if $\Gamma$ is a switch-commutative group. 
\end{theorem}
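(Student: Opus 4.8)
The plan is to prove the two implications separately, exploiting the fact that the only freedom left open in the definition of $\rho_{\Gamma}(G)$ is the \emph{order} in which the prescribed switches are performed on the vertices $v_i^{\sigma}$ of $G^*$; thus, for a fixed $G$, the graph $\rho_{\Gamma}(G)$ is well defined exactly when every ordering of this switching sequence yields the same $(n,m)$-graph. Note that in the construction each vertex $v_i^{\sigma}$ of $G^*$ is switched exactly once, by $\sigma$, so in any ordering every vertex of $G^*$ occurs precisely once.

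For the ``if'' direction, assume $\Gamma$ is switch-commutative. I would first record the elementary observation that a $\sigma$-switch at a vertex $x$ changes only the arcs and edges incident to $x$. Hence, since any permutation of a finite sequence is a product of adjacent transpositions and the switches occurring before and after a transposed pair are identical in the two orderings, it suffices to show that interchanging two consecutive switches, performed on the intermediate graph reached just before them, never changes the result. Let these be a switch at $x = v_a^{\alpha}$ and a switch at $y = v_b^{\beta}$; these are distinct vertices of $G^*$, receiving the switches $\alpha$ and $\beta$ respectively. Every arc or edge other than those joining $x$ and $y$ is incident to at most one of $x,y$ and is therefore unaffected by the interchange; and if $x$ and $y$ are adjacent, Corollary~\ref{cor adjacency} shows that the adjacency between them after performing the $\alpha$-switch at $x$ and the $\beta$-switch at $y$ is given by an order-independent expression (namely $\overline{\sigma_{y}(\overline{\sigma_{x}(t)})}=\sigma_{x}(\overline{\sigma_{y}(\overline{t})})$ if $y$ was a $t$-neighbor of $x$ beforehand). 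Therefore the interchange leaves the graph unchanged, so all orderings agree, and $\rho_{\Gamma}(G)$ is well defined for every $(n,m)$-graph $G$.

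For the ``only if'' direction I would argue by contraposition. If $\Gamma$ is not switch-commutative, then there exist $\sigma,\sigma' \in \Gamma$, a type $t \in A_{n,m}$, and two adjacent vertices $u,v$ in some $(n,m)$-graph with $v$ a $t$-neighbor of $u$, such that performing the $\sigma$-switch at $u$ and the $\sigma'$-switch at $v$ in the two possible orders yields two different adjacencies between $u$ and $v$. I would then take $G$ to be the $(n,m)$-graph on two vertices $v_1,v_2$ with $v_1$ a $t$-neighbor of $v_2$. In $G^*$ the vertices $v_1^{\sigma}$ and $v_2^{\sigma'}$ are adjacent, with $v_1^{\sigma}$ a $t$-neighbor of $v_2^{\sigma'}$, and in forming $\rho_{\Gamma}(G)$ they receive the $\sigma$-switch and the $\sigma'$-switch respectively. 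Since switches at all other vertices leave the adjacency between $v_1^{\sigma}$ and $v_2^{\sigma'}$ untouched, there are two orderings of the full switching sequence differing only in the relative order of these two switches, and by the choice of $\sigma,\sigma',t$ the corresponding graphs disagree on the adjacency between $v_1^{\sigma}$ and $v_2^{\sigma'}$. Hence $\rho_{\Gamma}(G)$ is not well defined, which proves the contrapositive.

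I expect the main obstacle to lie in the careful bookkeeping of the ``if'' direction: justifying the reduction to adjacent transpositions via the locality of switches, and tracking exactly which group element switches which vertex of $G^*$ so that Corollary~\ref{cor adjacency} (equivalently Lemma~\ref{lem convention}) is applied with the correct arguments. All of the genuinely non-trivial content is packaged in that corollary; once the transposition reduction is in place, the argument is purely organisational. The converse is comparatively routine, the one point deserving a line of care being the verification that switches at the remaining vertices do not interfere with the critical adjacency, so that the discrepancy guaranteed by non-commutativity actually persists in the final graph.
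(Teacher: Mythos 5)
Your proof is correct and follows essentially the same route as the paper, which identifies well-definedness of $\rho_{\Gamma}(G)$ with order-independence of the switching sequence and observes that this is exactly switch-commutativity applied to each pair of adjacent vertices. The paper states both directions very tersely (the ``if'' direction is declared clear and the ``only if'' direction is a one-sentence appeal to order-independence), whereas you supply the details it leaves implicit --- the reduction to adjacent transpositions via locality of switches, and the explicit two-vertex graph witnessing failure when $\Gamma$ is not switch-commutative --- so your write-up is a strictly more careful version of the same argument.
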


\begin{proof}
    If $\Gamma$ is a switch-commutative group, it is clear that $\rho_{\Gamma}(G)$ is well defined for all $(n,m)$-graphs $G$. Suppose that for every $(n,m)$-graph $G$, the $\Gamma$-switched graph $\rho_{\Gamma}(G)$ is well defined. We prove that $\Gamma$ is a switch-commutative group. Let $v$ be a $t$-neighbor of $u$ for some $u,v \in V(G)$. Then by definition, $v^{\sigma_j}$ is a $\overline{\sigma_j(\overline{\sigma_i(t)})}$-neighbor of  $u^{\sigma_i}$ in $\rho_{\Gamma}(G)$ or $v^{\sigma_j}$ is a $\sigma_i(\overline{\sigma_j(\overline{t})})$-neighbor of  $u^{\sigma_i}$ in $\rho_{\Gamma}(G)$, where $v^{\sigma_j}$, $u^{\sigma_i}$ are vertices from the copies $G^{\sigma_j}$ and $G^{\sigma_i}$, for any $\sigma_i,\sigma_j \in \Gamma$ respectively. As, $\rho_{\Gamma}(G)$ is well defined, the order in which we switch the vertices should not matter, which forces $\overline{\sigma_j(\overline{\sigma_i(t)})} = \sigma_i(\overline{\sigma_j(\overline{t})})$, Thus, by Lemma~\ref{lem convention}, $\Gamma$ is a switch-commutative group. 
    \end{proof}

This $\Gamma$-switched graph helps build a bridge between 
$\langle e \rangle$-homomorphism and $\Gamma$-homomorphism of two 
$(n,m)$-graphs.  We prove a useful property of a switch-commutative group.

\begin{proposition}\label{prop chromatic num}
	Let $G$ and $H$ be two $(n,m)$-graphs. We have $G \xrightarrow{\Gamma} H$ if and only if $G \xrightarrow{\langle e \rangle} \rho_{\Gamma}(H)$, where $\Gamma$ is a switch-commutative group. 
\end{proposition}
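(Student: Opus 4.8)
The plan is to first unwind the definition of $\rho_{\Gamma}(H)$ into an explicit adjacency rule and then obtain both implications from one bookkeeping identity. Starting from $H^{*}$, where $a^{\alpha}$ is a $t$-neighbor of $b^{\beta}$ iff $a$ is a $t$-neighbor of $b$ in $H$, and switching each $a^{\alpha}$ by $\alpha$, Corollary~\ref{cor adjacency} (applied with $v=a^{\alpha}$, $u=b^{\beta}$) yields: in $\rho_{\Gamma}(H)$ the vertex $a^{\alpha}$ is an $\overline{\alpha(\overline{\beta(t)})}$-neighbor of $b^{\beta}$ exactly when $a$ is a $t$-neighbor of $b$ in $H$. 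Writing $\widehat{\sigma}(t):=\overline{\sigma(\overline{t})}$ (the conjugate of $\sigma$ by the involution $t\mapsto\overline{t}$, so that $\widehat{\sigma^{-1}}=\widehat{\sigma}^{\,-1}$), this rule reads ``$a^{\alpha}$ is a $\widehat{\alpha}\beta(t)$-neighbor of $b^{\beta}$ iff $a$ is a $t$-neighbor of $b$'', while Lemma~\ref{lem convention} says precisely that $\widehat{\sigma_{v}}$ and $\sigma_{u}$ commute for all $\sigma_{u},\sigma_{v}\in\Gamma$. The one preliminary I would establish is that, since $\Gamma$ is switch-commutative and a subgroup, every $\Gamma$-equivalent graph $G'$ of $G$ arises from a single $\sigma_{v}$-switch at each vertex $v$ for some $v\mapsto\sigma_{v}\in\Gamma$: bubble-sort the defining sequence so switches at a common vertex become consecutive (consecutive switches at distinct vertices commute — trivially for non-adjacent pairs, by switch-commutativity for adjacent ones) and collapse each run into its product in $\Gamma$.

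For the forward direction, let $G\xrightarrow{\Gamma}H$ be witnessed by an $\langle e\rangle$-homomorphism $f\colon V(G')\to V(H)$, where $G'$ comes from the assignment $v\mapsto\tau_{v}$. I would define $g(v)=f(v)^{\tau_{v}^{-1}}\in V(\rho_{\Gamma}(H))$. If $v$ is an $s$-neighbor of $u$ in $G$, then by Corollary~\ref{cor adjacency} $v$ is a $\widehat{\tau_{v}}\tau_{u}(s)$-neighbor of $u$ in $G'$, so $f(v)$ is a $\widehat{\tau_{v}}\tau_{u}(s)$-neighbor of $f(u)$ in $H$; the adjacency rule of $\rho_{\Gamma}(H)$ (with $\alpha=\tau_{v}^{-1}$, $\beta=\tau_{u}^{-1}$) then makes $g(v)$ a $\widehat{\tau_{v}}^{\,-1}\tau_{u}^{-1}\widehat{\tau_{v}}\tau_{u}(s)$-neighbor of $g(u)$, and this permutation equals the identity because $\widehat{\tau_{v}}$ commutes with $\tau_{u}$ by Lemma~\ref{lem convention}. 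Hence $g$ is an $\langle e\rangle$-homomorphism $G\to\rho_{\Gamma}(H)$.

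For the backward direction, let $g\colon V(G)\to V(\rho_{\Gamma}(H))$ be an $\langle e\rangle$-homomorphism, and write $g(v)=f(v)^{\tau_{v}}$ using the pair-indexing of $V(\rho_{\Gamma}(H))$. Let $G'$ be obtained from $G$ by a $\tau_{v}^{-1}$-switch at each $v$. If $v$ is an $s$-neighbor of $u$ in $G$, then $g(v)$ is an $s$-neighbor of $g(u)$ in $\rho_{\Gamma}(H)$, so the adjacency rule forces $f(v)$ to be a $\widehat{\tau_{v}}^{\,-1}\tau_{u}^{-1}(s)$-neighbor of $f(u)$ in $H$; on the other hand Corollary~\ref{cor adjacency} shows that $v$ is a $\widehat{\tau_{v}}^{\,-1}\tau_{u}^{-1}(s)$-neighbor of $u$ in $G'$. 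The colours agree, so $f$ is colour-preserving from $G'$ to $H$, i.e. an $\langle e\rangle$-homomorphism, and $G'$ being $\Gamma$-equivalent to $G$ gives $G\xrightarrow{\Gamma}H$.

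I expect the main obstacle to be the non-commutation of the switch permutations with the arc-reversal map $t\mapsto\overline{t}$: one cannot simply cancel $\tau_{v}$ against $\tau_{v}^{-1}$ in the nested expressions above. The remedy — pushing each bar through, which turns a $\tau_{v}$ into its conjugate $\widehat{\tau_{v}}$, and then invoking Lemma~\ref{lem convention} (whose whole purpose is that the conjugates $\widehat{\Gamma}$ centralize $\Gamma$) to telescope the product — is what makes the computation go through, and getting the bars and inverses into the right places is the part that needs care. The only other point deserving attention is the reduction of an arbitrary switching sequence to one switch per vertex, which is precisely where switch-commutativity (and not merely closure of $\Gamma$ under composition) enters, explaining its appearance in the hypothesis.
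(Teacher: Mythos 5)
Your proof is correct and takes essentially the same route as the paper's: in both directions you pass between $f$ and $g$ by attaching or stripping the exponent $\tau_v^{-1}$, and then cancel the resulting permutation word using Corollary~\ref{cor adjacency} together with the commutation supplied by Lemma~\ref{lem convention}. Your $\widehat{\sigma}$ notation and the explicit reduction of an arbitrary switching sequence to a single switch per vertex (which the paper leaves implicit) merely streamline the same computation.
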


\begin{proof}
For the ``only if'' part of the proof, suppose
	$f\colon G \xrightarrow{\Gamma} H$. Thus, $f\colon G' \xrightarrow{\langle e \rangle} H$ for some $G' \equiv_{\Gamma} G$. 
Consider, $\phi\colon G \rightarrow \rho_{\Gamma}(H)$ such that, 
	$$\phi(g) = f(g)^{{\sigma}^{-1}}$$
    where $\sigma$ is the switch applied on $g$ while obtaining $G'$ from $G$. 
We want to prove that $\phi$ is an $\langle e \rangle$-homomorphism. 

 Let $g_j$ be a $t$-neighbor of $g_i$ in $G$. 
 Suppose the switches $\sigma_i, \sigma_j \in \Gamma$ is applied on $g_i,g_j$ respectively to get $G'$. Thus by Corollary \ref{cor adjacency}, $$g_j  ~\text{is a}~ \overline{\sigma_j(\overline{\sigma_i(t)})}\text{-neighbor of}~ g_i ~\text{in}~ G'.$$ As $f$ is an $\langle e \rangle$-homomorphism, we have, 
 $$f(g_j) ~\text{is a}~\overline{\sigma_j(\overline{\sigma_i(t)})}\text{-neighbor of}~ f(g_i) ~\text{in}~ H.$$ 
 
 By the definition of $\rho_{\Gamma}(G)$, 
 $$f(g_j)^{{\sigma_j}^{-1}}~\text{is a}~ \overline{\sigma_j^{-1}(\overline{\sigma_i^{-1}(\overline{\sigma_j(\overline{\sigma_i(t)})})})}\text{-neighbor of}~ f(g_i)^{{\sigma_i}^{-1}} ~\text{in}~ \rho_{\Gamma}(H).$$ 
Thus, it is enough to prove, 
$$\overline{\sigma_j^{-1}(\overline{\sigma_i^{-1}(\overline{\sigma_j(\overline{\sigma_i(t)})})})} = t $$

Due to Lemma \ref{lem convention}, we have, 
$$\overline{\sigma_j(\overline{\sigma_i(t)})} = \sigma_i(\overline{\sigma_j(\overline{t})})$$
which implies 
$$\overline{\sigma_j^{-1}(\overline{\sigma_i^{-1}(\overline{\sigma_j(\overline{\sigma_i(t)})})})} =\overline{\sigma_j^{-1}(\overline{\sigma_i^{-1}(\sigma_i(\overline{\sigma_j(\overline{t})}))})} \\
= \overline{\sigma_j^{-1}(\sigma_j(\overline{t}))} \\
= \overline{\overline{t}} 
= t.$$

    This implies, $\phi$ is an $\langle e \rangle$-homomorphism an completes the ``only if'' part of the proof.


    \medskip
    
    For the ``if''  part of the proof, 
   let  $f\colon G \xrightarrow{\langle e \rangle} \rho_{\Gamma}(H)$
   be an $\langle e \rangle$-homomorphism. 
   If $f(g) = h^{\sigma}$ for some $h \in V(H)$ and some $\sigma \in \Gamma$, then set $\varphi(g) = h$, for all $g \in V(G)$. 
   Moreover, construct $G'$ from $G$ by applying $\sigma$ on $g$, for all $g \in V(G)$. 
Since $V(G) = V(G')$ and $G'$ is $\Gamma$-equivalent to $G$,  it is enough to show   that  $\varphi$ is an $\langle e \rangle$-homomorphism of $G'$ to $H$.

  Suppose $f(g_i) = h_i^{\sigma_i}$ and $f(g_j)=h_j^{\sigma_j}$ and let  $g_j$ be a $t$-neighbor of $g_i$ in $G'$.    
    We now have to show that 
   $h_j$  is a $t$-neighbor of $h_i$ in $H$. 
   As
$$g_j ~\text{is a}~ t\text{-neighbor of}~  g_i~\text{in}~ G',$$
by Corollary \ref{cor adjacency}, we have, 
$$g_j ~\text{is a}~ 
\overline{\sigma_j^{-1}(\overline{\sigma_i^{-1}(t)}})\text{-neighbor of}~ g_i ~\text{in}~ G.$$
As $f$ is an $\langle e \rangle$-homomorphism, 
$$f(g_j) ~\text{is a}~ 
\overline{\sigma_j^{-1}(\overline{\sigma_i^{-1}(t)}})\text{-neighbor of}~ f(g_i) ~\text{in}~ \rho_{\Gamma}(H).$$ That is, 
$$ h_j^{\sigma_j}~\text{is a}~ \overline{\sigma_j^{-1}(\overline{\sigma_i^{-1}(t)}})\text{-neighbor of}~ h_i^{\sigma_i} ~\text{in}~ \rho_{\Gamma}(H).$$
Thus, by Corollary \ref{cor adjacency}, we have, 
$$ h_j~\text{is a}~ 
\overline{\sigma_j(\overline{\sigma_i(\overline{\sigma_j^{-1}(\overline{\sigma_i^{-1}(t)}}))})}\text{-neighbor of}~ h_i ~\text{in}~ H.$$
Therefore, it is enough to prove, 
$$\overline{\sigma_j(\overline{\sigma_i(\overline{\sigma_j^{-1}(\overline{\sigma_i^{-1}(t)}}))})} =t$$
By Lemma \ref{lem convention}, we have, 
$$\overline{\sigma_j(\overline{\sigma_i(t)})} = \sigma_i(\overline{\sigma_j(\overline{t})})$$
which implies 
$$\sigma_i(\overline{\sigma_j(\overline{\overline{\sigma_j^{-1}(\overline{\sigma_i^{-1}(t)}}}))})
= \sigma_i(\overline{\sigma_j(\sigma_j^{-1}(\overline{\sigma_i^{-1}(t)}))})
= \sigma_i(\overline{\overline{\sigma_i^{-1}(t)}})
= \sigma_i(\sigma_i^{-1}(t)) =t$$
Thus,  
$h_j$ is a $t$-neighbor of $h_i$ in $H$. This completes the ``if'' part of the proof. 
\end{proof}

\begin{theorem}\label{th isomorphism}

	Let $G$ and $H$ be $(n,m)$-graphs. Then, $G \equiv_{\Gamma} H$ if and only if  $\rho_{\Gamma}(G) \equiv_{\langle e \rangle} \rho_{\Gamma}(H)$,  where $\Gamma$ is a switch-commutative group.  

\end{theorem}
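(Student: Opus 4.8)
The plan is to leverage Proposition~\ref{prop chromatic num} twice, together with the basic fact that a $\Gamma$-isomorphism is a bijective $\Gamma$-homomorphism whose inverse is also a $\Gamma$-homomorphism, to reduce the statement to a clean combinatorial bookkeeping of which $\sigma$-switches are applied at which vertices. For the ``only if'' direction, suppose $G \equiv_\Gamma H$ via a bijection $f$. Then $f$ is a $\Gamma$-homomorphism $G \xrightarrow{\Gamma} H$, so by Proposition~\ref{prop chromatic num} we obtain an $\langle e \rangle$-homomorphism $\phi : G \xrightarrow{\langle e \rangle} \rho_\Gamma(H)$; similarly $f^{-1}$ gives $H \xrightarrow{\langle e \rangle} \rho_\Gamma(G)$. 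The key point will be to ``lift'' these to maps between the $\Gamma$-switched graphs: I would define $\Phi : V(\rho_\Gamma(G)) \to V(\rho_\Gamma(H))$ on a typical vertex $v_i^\tau$ of $\rho_\Gamma(G)$ by pushing $v_i$ through $\phi$ and then correcting by the switch-label $\tau$ (concretely, if $\phi(v_i) = h_j^{\sigma_i}$, send $v_i^\tau \mapsto h_j^{\tau \sigma_i}$, using that $\Gamma$ is Abelian so the composition order is immaterial). One then checks $\Phi$ is an $\langle e \rangle$-homomorphism by exactly the adjacency computation appearing in the proof of Proposition~\ref{prop chromatic num}: the $t$-neighbor relation in $\rho_\Gamma(G)$ between $v_i^\tau$ and $v_k^{\tau'}$ unwinds via Corollary~\ref{cor adjacency} to an adjacency in $G$, which $\phi$ preserves into $\rho_\Gamma(H)$, and the extra $\tau,\tau'$ corrections cancel by Lemma~\ref{lem convention} just as the $\sigma_i,\sigma_j$ corrections did there.

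For bijectivity and the existence of an $\langle e \rangle$-homomorphism inverse, I would run the symmetric construction starting from $f^{-1}$, obtaining $\Psi : V(\rho_\Gamma(H)) \to V(\rho_\Gamma(G))$, and verify $\Psi \circ \Phi = \mathrm{id}$ and $\Phi \circ \Psi = \mathrm{id}$ by tracking the switch-labels (this is where $\Gamma$ Abelian and $f^{-1} \circ f = \mathrm{id}$ combine to make the exponents collapse). Since both $\Phi$ and $\Psi$ are $\langle e \rangle$-homomorphisms and mutually inverse, $\rho_\Gamma(G) \equiv_{\langle e \rangle} \rho_\Gamma(H)$. For the ``if'' direction, assume $\rho_\Gamma(G) \equiv_{\langle e \rangle} \rho_\Gamma(H)$ via a bijection $\Phi$; restricting attention to the copies $v_i^{e}$ gives an $\langle e \rangle$-homomorphism $G \to \rho_\Gamma(H)$, hence by the ``if'' part of Proposition~\ref{prop chromatic num} a $\Gamma$-homomorphism $G \xrightarrow{\Gamma} H$. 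The delicate part is arguing this $\Gamma$-homomorphism is actually a $\Gamma$-isomorphism: I would use that $\Phi$ is a bijection on vertex sets of size $|\Gamma|\,|V(G)|$ and $|\Gamma|\,|V(H)|$ respectively, forcing $|V(G)| = |V(H)|$, and then show the induced vertex map $G \to H$ is injective (two vertices of $G$ landing on the same vertex of $H$ after switching would force a collision among the $v_i^e$ under $\Phi$ after relabeling), hence bijective; the same argument applied to $\Phi^{-1}$ yields the inverse $\Gamma$-homomorphism.

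The main obstacle I anticipate is the bookkeeping in the ``if'' direction: extracting a genuine $\Gamma$-\emph{isomorphism} $G \to H$ from an arbitrary $\langle e \rangle$-isomorphism $\rho_\Gamma(G) \to \rho_\Gamma(H)$ is subtler than the homomorphism statement, because an $\langle e \rangle$-isomorphism of the switched graphs need not respect the ``fibre'' structure $\{v_i^\sigma : \sigma \in \Gamma\}$ in an obvious way, so one must check that the derived vertex map on $V(G)$ is well-defined and bijective rather than merely adjacency-preserving. I expect this to go through using that each $\rho_\Gamma$-graph's automorphism structure lets one normalize $\Phi$ so that $\Phi(v_i^e) = h_{\pi(i)}^{\sigma_i}$ for a permutation $\pi$, after which $\pi$ is exactly the underlying isomorphism; but pinning down this normalization cleanly — rather than waving at it — is the step that will require the most care. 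A secondary, purely mechanical concern is keeping the overlined/composed switch expressions straight, for which I would cite Lemma~\ref{lem convention} and Corollary~\ref{cor adjacency} rather than re-deriving them.
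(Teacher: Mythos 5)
Your ``only if'' direction is essentially the paper's argument: both lift the given isomorphism to a map between the switched graphs by adjusting switch-labels fibrewise and verify adjacency preservation with Lemma~\ref{lem convention} and Corollary~\ref{cor adjacency}; that half is sound.

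The ``if'' direction, however, has a genuine gap, and it sits exactly where you say you anticipate trouble. Restricting an arbitrary $\langle e \rangle$-isomorphism $\Phi$ to the copies $v_i^{e}$ does \emph{not} yield an injective map $V(G)\to V(H)$: two \emph{non-adjacent} vertices $v_1,v_2$ of $G$ may satisfy $\Phi(v_1^{e})=h^{\sigma}$ and $\Phi(v_2^{e})=h^{\sigma'}$ with $\sigma\neq\sigma'$, i.e.\ land in the same fibre of $\rho_{\Gamma}(H)$, without violating either the bijectivity of $\Phi$ or adjacency preservation, since distinct copies of the same vertex are pairwise non-adjacent by construction. So your claim that a collision in $H$ ``would force a collision among the $v_i^{e}$ under $\Phi$'' is false as stated, and the cardinality count $|V(G)|=|V(H)|$ does not rescue it. The normalization you gesture at --- arranging $\Phi(v^{\sigma})=(\Phi(v))^{\sigma}$ for all $v$ and $\sigma$ --- is precisely the content of the paper's proof and is the nontrivial step: the paper runs an extremal exchange argument, choosing among all $\langle e \rangle$-isomorphisms $\psi$ and all representative sequences $S$ (one copy of each vertex of $G$) a pair maximizing the set $Y_{S,\psi}=\{v^{\sigma} : \psi(v^{\sigma})=(\psi(v))^{\sigma}\}$, and showing that if this set is not all of $V(\rho_{\Gamma}(G))$ one can swap two values of $\psi$ to produce another $\langle e \rangle$-isomorphism with a strictly larger such set, a contradiction. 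Only after full equivariance is established does the image of a representative sequence become a representative sequence of $H$, giving $\rho_{\Gamma}(G)[S^*]\equiv_{\langle e\rangle}\rho_{\Gamma}(H)[R]$ and hence $G\equiv_{\Gamma}H$ by composing $\Gamma$-isomorphisms. Without supplying that exchange argument (or an equivalent substitute), your ``if'' direction is incomplete.
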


\begin{proof}
For the ``only if" part of the proof, suppose that $G \equiv_{\Gamma} H.$ 
Let $G'$ be a $\Gamma$-equivalent graph of $G$ such that 
$f\colon G' \xrightarrow{\langle e \rangle} H$ is 
an $\langle e \rangle$-isomorphism. 
Suppose that $G$ has vertices $g_1, g_2, \ldots, g_p$, and $G'$ is obtained by performing a 
$\tau_i$-switch on $g_i$, where $g_i \in V(G)$ and $\tau_i \in \Gamma$. For convenience,
the particular switch applied on $g_i$ is called $\tau_i$ 

For any $g_{i}^{\sigma_{i}} \in V(\rho_{\Gamma}(G))$
 we define the function $\phi\colon V(\rho_{\Gamma}(G)) \rightarrow V(\rho_{\Gamma}(H))$ as follows: 
 $$\phi(g_{i}^{\sigma_i}) = \left(f(g_{i})\right)^{\tau_i^{-1}\sigma_i}.$$ 
 
 Next we prove that $\phi$ is a $\langle e \rangle$-isomorphism of 
 $\rho_{\Gamma}(G)$ to $\rho_{\Gamma}(H)$.  

Let $g_j^{\sigma_j}$ be a $t$-neighbor of $g_i^{\sigma_i}$ in 
$\rho_{\Gamma}(G)$. Then by Corollary \ref{cor adjacency},
$$g_j ~\text{is a}~ \overline{\sigma_j^{-1}(\overline{\sigma_i^{-1}(t)})}\text{-neighbor of}~ g_i ~\text{in}~ G$$
and
$$g_j ~\text{is a}~ \overline{\tau_j(\overline{\tau_i(\overline{\sigma_j^{-1}(\overline{\sigma_i^{-1}(t)})})})} ~\text{-neighbor of}~ g_i ~\text{in}~ G'.$$ 
As $f$ is a $\langle e \rangle$-isomorphism, 
$$f(g_j) ~\text{is a}~ \overline{\tau_j(\overline{\tau_i(\overline{\sigma_j^{-1}(\overline{\sigma_i^{-1}(t)})})})} ~\text{-neighbor of}~ f(g_i) ~\text{in}~ H.$$ 
Now, 
$$(f(g_j))^{\sigma_j} ~\text{is a}~ \overline{\sigma_j(\overline{\sigma_i(\overline{\tau_j(\overline{\tau_i(\overline{\sigma_j^{-1}(\overline{\sigma_i^{-1}(t)})})})})})} ~\text{-neighbor of}~ (f(g_i))^{\sigma_i} ~\text{in}~ \rho_{\Gamma}(H).$$
Thus, $$\phi(g_{j}^{\sigma_j}) = (f(g_j))^{\tau_j^{-1}\sigma_j}  ~\text{is a}~ \overline{\tau_j^{-1}\sigma_j(\overline{\tau_i^{-1}\sigma_i(\overline{\tau_j(\overline{\tau_i(\overline{\sigma_j^{-1}(\overline{\sigma_i^{-1}(t)})})})})})} ~\text{-neighbor of}~ \phi(g_{i}^{\sigma_i}) = (f(g_i))^{\sigma_i} .$$ 
As $\Gamma$ is switch-commutative, we get,
$$\overline{\tau_j^{-1}\sigma_j(\overline{\tau_i^{-1}\sigma_i(\overline{\tau_j(\overline{\tau_i(\overline{\sigma_j^{-1}(\overline{\sigma_i^{-1}(t)})})})})})} = \overline{\sigma_j(\overline{\underbracket{\sigma_i(\overline{\tau_j^{-1}(\overline{\underbracket{\tau_i^{-1}(\overline{\tau_j(\overline{\underbracket{\tau_i(\overline{\sigma_j^{-1}(\overline{\sigma_i^{-1}(t)})})}})})}})})}})}.$$
By repeated application of Lemma~\ref{lem convention} on $(1),(2)$ and $(3)$, we get, 
$$   \overline{\sigma_j(\overline{\underbracket{\sigma_i(\overline{\tau_j^{-1}(\overline{\underbracket{\tau_i^{-1}(\overline{\tau_j(\overline{\underbracket{\tau_i(\overline{\sigma_j^{-1}(\overline{\sigma_i^{-1}(t)})})}_{(1)}})})}_{(2)}})})}_{(3)}})} = t$$
 Therefore,  $\phi(g_j^{\sigma_j})$ is a $t$-neighbor of 
 $\phi(g_i^{\sigma_i})$ in $\rho_{\Gamma(H)}$. 
Thus, $\phi$ a $\langle e \rangle$-isomorphism of $\rho_{\Gamma}(G)$ to $\rho_{\Gamma}(H)$.


\bigskip

For the ``if'' part of the proof, suppose $\rho_{\Gamma}(G) \equiv_{\langle e \rangle} \rho_{\Gamma}(H)$ and we have to show $G \equiv_{\Gamma} H$. 

Assume $g_{1}, g_{2}, \ldots, g_{p}$ be the vertices of $G$. A sequence of vertices in $\rho_{\Gamma}(G)$
of the form $(g_1^{\sigma_1}, g_2^{\sigma_2}, \ldots, g_p^{\sigma_p})$ is a representative sequence of $G$ in $\rho_{\Gamma}(G)$, where $\sigma_i \in \Gamma$ is any element for $i \in \{1,2, \ldots, p\}$ (repetition of elements among $\sigma_i$ is allowed here).

Given an $\langle e \rangle$-isomorphism  
$\psi\colon \rho_{\Gamma}(G) \xrightarrow{\langle e \rangle} \rho_{\Gamma}(H)$ and a 
representative sequence $S$ of $G$ in $\rho_{\Gamma}(G)$, 
define the set 
$$Y_{S, \psi} = \{ v^{\sigma} ~|~ \psi(v^{\sigma}) = (\psi(v))^{\sigma} \text{ where }  v \in S \text{ and } \sigma \in \Gamma \}.$$
Let $Y_{S^*, \varphi}$ be the set satisfying the property 
$|Y_{S^*, \varphi}| \geq |Y_{S, \psi}|$ 
where $S$ varies over all representative sequences and $\psi$ varies over all $\langle e \rangle$-isomorphisms.

We proceed by the method of contradiction to show that $Y_{S^*, \varphi} = V(\rho_{\Gamma}(G))$. Thus, let us assume the contrary, that is, let 
$Y_{S^*, \varphi} \neq V(\rho_{\Gamma}(G))$. 
This implies that there exists a $v^\sigma \in V(\rho_{\Gamma}(G))$, for some $v \in S$ and some $\sigma \in \Gamma$
such that $\varphi(v^{\sigma}) \neq (\varphi(v))^{\sigma}$. Let us fix
$g = \varphi^{-1}(\varphi(v)^{\sigma}) \in \rho_{\Gamma}(G)$. 
Next let us define the function

	\[\widehat{\varphi}(x) = \begin{cases}
 		\varphi(x) &\text{ if }x \neq g, v^{\sigma}, \\
 		\varphi(g) &\text{ if } x = v^{\sigma}, \\
 		\varphi(v^{\sigma}) &\text{ if } x = g.
 	\end{cases}\]

Next we are going to show that $\widehat{\varphi}$ is an $\langle e \rangle$-isomorphism of 
$\rho_{\Gamma}(G)$ and $\rho_{\Gamma}(H)$. So, we need to show that 
 $x$ is a $t$-neighbor of $y$ in $\rho_{\Gamma}(G)$
 if and only if 
 $\widehat{\varphi}(x)$ is a $t$-neighbor of $\widehat{\varphi}(y)$ in $\rho_{\Gamma}(H)$.
 Notice that, it is enough to check this for $x=g$ and $x=v^{\sigma}$ while $y$ varies 
 over all vertices of $\rho_{\Gamma}(G)$. We will separately handle the exceptional case when $x=v^{\sigma}$ and $y = g$ first. 
 
 \begin{enumerate}[(i)]
     \item  If $x = v^{\sigma}$ and $y = g$, then
     $\varphi(v)$ and $\varphi(v)^{\sigma}$ are non-adjacent by the definition of $\rho_{\Gamma}(G)$. Thus, 
     $v = \varphi^{-1}(\varphi(v))$ and
     $g = \varphi^{-1}(\varphi(v)^{\sigma})$ are non-adjacent. Hence  $x=v^{\sigma}$ and $y=g$ are also non-adjacent. Therefore,  
     $\widehat{\varphi}(x) = \varphi(g)$
     and $\widehat{\varphi}(y) = \varphi(v^{\sigma})$ are non-adjacent. 
     
     \item If $x = v^{\sigma}$ and $y \neq g$, then $y$ is a $t$-neighbour of $x$ in $\rho_{\Gamma}(G)$ if and only if $\varphi(y)$ is a $t$-neighbor of $\varphi(x)$ in $\rho_{\Gamma}(H)$, as $\varphi$ is an $\langle e \rangle$-isomorphism. 
     Observe that  $\widehat{\varphi}(x) = \varphi(g) = \varphi(v)^{\sigma}$ 
     as $g = \varphi^{-1}(\varphi(v)^{\sigma})$, 
     and $\widehat{\varphi}(y) = \varphi(y)$.
      Since $x = v^{\sigma}$, $y$ is a $\sigma^{-1}(t)$-neighbor of $v$ in $\rho_{\Gamma}(G)$, if and only if $\varphi(y)$ is a $\sigma^{-1}(t)$-neighbor of $\varphi(v)$ in $\rho_{\Gamma}(H)$, if and only if, $\varphi(y) = \widehat\varphi(y)$ is a $t$-neighbour of $\varphi(v)^{\sigma} = \widehat{\varphi}(x)$ in $\rho_{\Gamma}(H)$.
     
     \item  If $x = g$ and $y \neq v^{\sigma}$, then $y$ is a $t$-neighbor of $x$ in $\rho_{\Gamma}(G)$ if and only if, $\varphi(y)$ is a $t$-neighbor of $\varphi(x)= \varphi(g) = \varphi(v)^{\sigma}$ in $\rho_{\Gamma}(H)$, as $g = \varphi^{-1}(\varphi(v)^{\sigma})$. 
     The previous statement holds if and only if 
     $\varphi(y)$ is a $\sigma^{-1}(t)$-neighbor of $\varphi(v)$ in $\rho_{\Gamma}(H)$ if and only if $y$ is a $\sigma^{-1}(t)$-neighbor of $v$ in $\rho_{\Gamma}(G)$ if and only if  $y$ is a $t$-neighbor of $v^{\sigma}$ in $\rho_{\Gamma}(G)$ if and only if $\varphi(y)=\widehat\varphi(y)$ is a $t$-neighbor of $\varphi(v^{\sigma}) = \widehat\varphi(x)$ in $\rho_{\Gamma}(H)$ . 
 \end{enumerate}

However, now we have $|Y_{S^*, \varphi}| < |Y_{S^*, \widehat\varphi}|$. This is a contradiction to the definition of $Y_{S^*, \varphi}$, and hence 
$Y_{S^*, \varphi} = V(\rho_{\Gamma}(G))$.

\medskip

Let $v_1, v_2$ be two distinct vertices in $S^*$. If 
$\varphi(v_1)^\sigma = \varphi(v_2)$ for any $\sigma \in \Gamma$, then 
$\varphi(v_1^\sigma) = \varphi(v_2)$. This implies 
$v_1^\sigma = v_2$ because $\varphi$ is a bijection. However, this is not possible as $v_1, v_2$ are distinct vertices from a representative sequence $S^*$ of $G$ in $\rho_{\Gamma}(G)$. Hence, 
$\varphi(v_1)^\sigma \neq \varphi(v_2)$ for any $v_1, v_2 \in S^*$. That means, $\varphi(S^*) = R$ is a representative sequence of $H$ in $\rho_{\Gamma}(H)$. 
Thus, note that $\langle e \rangle$-isomorphism restricted to the induced subgraph $\rho_{\Gamma}(G)[S^*]$ is also an $\langle e \rangle$-isomorphism to the induced subgraph $\rho_{\Gamma}(H)[R]$. 
That is, $\rho_{\Gamma}(G)[S^*] \equiv_{\langle e \rangle}  \rho_{\Gamma}(H)[R]$.
As $\langle e \rangle \subseteq \Gamma$, this also means  
$\rho_{\Gamma}(G)[S^*] \equiv_{\Gamma}  \rho_{\Gamma}(H)[R]$.

Moreover, as $S^*$ and $R$ are representative sequences of $G$ and $H$, respectively, we have $\rho_{\Gamma}(G)[S^*] \equiv_{\Gamma} G$ and  $\rho_{\Gamma}(H)[R] \equiv_{\Gamma} H$. Thus we are done by composing the $\Gamma$-isomorphisms. 
\end{proof}

The above result generalizes results by Brewster and Graves~\cite{brewster2009edge} (see Theorem~$12$) and 
Sen~\cite{sen2017homomorphisms} (see Theorem~$3.4$). Additionally, it (re)solves an open problem given by  Klostermeyer and MacGillivray~\cite{klostermeyer2004homomorphisms} (see Open Problem~$2$ in the conclusion) by restricting the result to 
$(n,m)=(1,0)$, where $\Gamma$ is the group in which the only non-identity element simply  reverses the direction of the arcs.

The next result follows from the fundamental theorem of finite abelian groups.

\begin{theorem}
   Let $\Gamma_1$ be an abelian, 
   $(n,m)$-switch-commutative group. 
   Let $\Gamma_2 \subseteq \Gamma_1$. If $p^{2} \nmid |\Gamma_1|$ for any prime $p$, then $\rho_{\Gamma_1}(G) \equiv_{\langle e \rangle} \rho_{\Gamma_1/\Gamma_2}(\rho_{\Gamma_2}(G))$.
\end{theorem}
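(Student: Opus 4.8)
The plan is to reduce the assertion to a statement about an internal direct product decomposition of $\Gamma_1$ inside $S_{2n+m}$, and then to compare the two switched graphs directly through one explicit bijection, the essential ingredient being the ``switches compose'' identity, which is really Lemma~\ref{lem convention} used twice.

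\emph{Step 1 (group theory, where the hypothesis is used).} Since $\Gamma_1$ is abelian and $|\Gamma_1|$ is squarefree, the fundamental theorem of finite abelian groups forces $\Gamma_1$ to be cyclic. Hence $\Gamma_2$, a subgroup of a cyclic group of squarefree order, admits a \emph{unique} complement $\Gamma_3 \subseteq \Gamma_1$: one has $\Gamma_1 = \Gamma_2\Gamma_3$ and $\Gamma_2 \cap \Gamma_3 = \langle e\rangle$, so $\Gamma_1 = \Gamma_2 \times \Gamma_3$ internally, and the map $\Gamma_3 \to \Gamma_1/\Gamma_2$, $x \mapsto x\Gamma_2$, is an isomorphism. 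As $\Gamma_3 \subseteq \Gamma_1$, the group $\Gamma_3$ is again switch-commutative, so $\rho_{\Gamma_3}(\,\cdot\,)$ is well defined; we take $\Gamma_3$ as the realization of $\Gamma_1/\Gamma_2$ inside $S_{2n+m}$, the uniqueness of $\Gamma_3$ making this unambiguous (and in any case the argument below goes through verbatim for any complement). It therefore suffices to prove $\rho_{\Gamma_1}(G) \equiv_{\langle e\rangle} \rho_{\Gamma_3}(\rho_{\Gamma_2}(G))$; note that the two vertex sets already have equal size, $|V(G)|\cdot|\Gamma_1| = |V(G)|\cdot|\Gamma_2|\cdot|\Gamma_3|$.

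\emph{Step 2 (switches compose).} For $\alpha,\beta \in \Gamma_1$ and $t \in A_{n,m}$ put $\Phi_{\alpha,\beta}(t) := \overline{\alpha(\overline{\beta(t)})}$, so that by Corollary~\ref{cor adjacency}, if $v$ is a $t$-neighbor of $u$ and we $\alpha$-switch $v$ and $\beta$-switch $u$, then $v$ becomes a $\Phi_{\alpha,\beta}(t)$-neighbor of $u$. The identity I would establish is
\[
\Phi_{\alpha',\beta'}\big(\Phi_{\alpha,\beta}(t)\big) = \Phi_{\alpha'\alpha,\ \beta'\beta}(t),
\]
which expresses that switching an endpoint by $\alpha$ and then by $\alpha'$ affects the incident adjacency exactly as switching it once by the product $\alpha'\alpha$ does. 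Unwinding the definition of $\Phi$ on both sides, cancelling the outermost $\overline{(\cdot)}$ and the injective outer permutation, and using the group law and $\overline{\overline{x}}=x$, this reduces to
\[
\overline{\beta'(\overline{\alpha(\overline{\beta(t)})})} = \alpha(\overline{\beta'(\beta(t))}),
\]
which is precisely Lemma~\ref{lem convention} applied with $\sigma_v = \beta'$, $\sigma_u = \alpha$ and argument $\overline{\beta(t)}$; crucially this application is inside $\Gamma_1$, mixing an element of $\Gamma_2$ with one of $\Gamma_3$, which is why the hypothesis that $\Gamma_1$ itself is switch-commutative is needed. (Abelianness of $\Gamma_1$ is used so that $\alpha'\alpha=\alpha\alpha'$, $\beta'\beta=\beta\beta'$, making the products below order-independent.)

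\emph{Step 3 (the isomorphism).} Expanding the definition of $\rho_{\Gamma}$ and applying Corollary~\ref{cor adjacency}: in $\rho_{\Gamma_1}(G)$ the vertex $v_i^{\sigma}$ is a $\Phi_{\sigma,\sigma'}(t)$-neighbor of $v_j^{\sigma'}$ exactly when $v_i$ is a $t$-neighbor of $v_j$ in $G$; iterating this description (first over $\Gamma_2$, then over $\Gamma_3$) shows that in $\rho_{\Gamma_3}(\rho_{\Gamma_2}(G))$ the vertex $(v_i^{\tau})^{\mu}$ is a $\Phi_{\mu,\mu'}\big(\Phi_{\tau,\tau'}(t)\big)$-neighbor of $(v_j^{\tau'})^{\mu'}$ under the same hypothesis, which by Step~2 equals $\Phi_{\mu\tau,\ \mu'\tau'}(t)$. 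Define $\psi\big((v_i^{\tau})^{\mu}\big) := v_i^{\mu\tau}$. Since $\Gamma_1 = \Gamma_2 \times \Gamma_3$, the map $(\tau,\mu)\mapsto\mu\tau$ is a bijection $\Gamma_2\times\Gamma_3\to\Gamma_1$, so $\psi$ is a bijection of vertex sets; and by the two adjacency descriptions, $(v_i^{\tau})^{\mu}$ is a $c$-neighbor of $(v_j^{\tau'})^{\mu'}$ in $\rho_{\Gamma_3}(\rho_{\Gamma_2}(G))$ if and only if $v_i^{\mu\tau}$ is a $c$-neighbor of $v_j^{\mu'\tau'}$ in $\rho_{\Gamma_1}(G)$ (non-adjacency in both graphs being exactly non-adjacency of $v_i, v_j$ in $G$). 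Hence $\psi$ and $\psi^{-1}$ preserve all $t$-neighbor relations, so $\psi$ is an $\langle e\rangle$-isomorphism, and we are done.

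The group-theoretic Step~1 is where $p^2\nmid|\Gamma_1|$ is indispensable: it is exactly what guarantees a complement of $\Gamma_2$, hence an honest subgroup of $S_{2n+m}$ realizing the quotient $\Gamma_1/\Gamma_2$. The main technical obstacle is Step~2: because the bar operation $\overline{(\cdot)}$ does not commute with the permutations in $\Gamma_1$, one cannot simply collect exponents, and verifying that an iterated $\Gamma_2$-then-$\Gamma_3$ switch genuinely realizes the single $\Gamma_1$-switch requires carefully reordering the nested $\overline{\sigma(\overline{\cdots})}$ expressions using switch-commutativity.
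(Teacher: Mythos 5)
Your proposal is correct and follows essentially the same route as the paper: use the squarefree hypothesis to realize $\Gamma_1$ as an internal direct product $\Gamma_2\times\Gamma_3$ with $\Gamma_3\cong\Gamma_1/\Gamma_2$, and compare $\rho_{\Gamma_1}(G)$ with $\rho_{\Gamma_3}(\rho_{\Gamma_2}(G))$ via the bijection $(v^{\tau})^{\mu}\mapsto v^{\mu\tau}$. In fact your Step~2, the composition identity $\Phi_{\mu,\mu'}\bigl(\Phi_{\tau,\tau'}(t)\bigr)=\Phi_{\mu\tau,\,\mu'\tau'}(t)$ derived from Lemma~\ref{lem convention}, supplies exactly the adjacency-preservation computation that the paper's proof leaves unverified, so your write-up is the more complete of the two.
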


\begin{proof}
 Since $\Gamma_1$ is a finite Abelian group, $\Gamma_1/\Gamma_2$ and $\Gamma_2$ both are normal subgroups of $\Gamma_1$,  As, $p^{2} \nmid |\Gamma_1|$, we have $ \Gamma_1/\Gamma_2 \times \Gamma_2 \cong \Gamma_1$. Observe that 
 every element $ \sigma \in \Gamma_1$ can be uniquely written as  $\alpha . \beta$, where $\alpha \in \Gamma_1/\Gamma_2, \beta \in \Gamma_2$ (unique factorization). 
 Now let $G$ be an $(n,m)$-graph. We prove, $f\colon \rho_{\Gamma_1}(G) \to \rho_{\Gamma_1/\Gamma_2}(\rho_{\Gamma_2}(G))$ is an isomorphism. Consider, 
 $$f\colon V(\rho_{\Gamma_1}(G)) \to V(\rho_{\Gamma_1/\Gamma_2}(\rho_{\Gamma_2}(G)),$$
 $$f(u^{\sigma}) = (u^{\beta})^{\alpha}.$$ where $\alpha \in \Gamma_1/\Gamma_2, \beta \in \Gamma_2$, and $\alpha . \beta = \sigma$.

Let $\sigma_1 = \alpha_1.\beta_1$ and $\sigma_2=\alpha_2.\beta_2$, where $\sigma_1, \sigma_2 \in \Gamma_1$, $\alpha_1, \alpha_2 \in \Gamma_1/\Gamma_2$, $\beta_1, \beta_2 \in \Gamma_2$. 
Now consider the following two sequences of switches on $u$ and $v$, respectively. 
The first sequence is: 
$\beta_1$ applied on $u$, 
$\alpha_1$ applied on $u$, 
$\beta_2$ applied on $v$, 
$\alpha_2$ applied on $v$. 
Suppose as a result of the above-mentioned sequence of switches, $v$ becomes an $s$-neighbor of $u$. 
The second sequence is: 
$\beta_1$ applied on $u$, 
$\beta_2$ applied on $v$, 
$\alpha_1$ applied on $u$, 
$\alpha_2$ applied on $v$. 
Since $\Gamma_1$ is switch-commutative, 
as a result of the above-mentioned sequence of switches, $v$ must become an 
$s$-neighbor 
(from initially being a $t$-neighbor)
of $u$. 
From the above observation we can conclude that
$v^{\sigma_2}$ is an $s$-neighbor of 
$u^{\sigma_1}$ in $\rho_{\Gamma_1}(G)$
and 
$(v^{\beta_2})^{\alpha_2}$ 
is an 
$s$-neighbor of 
$(u^{\beta_1})^{\alpha_1}$ in 
$\rho_{\Gamma_1/\Gamma_2}(\rho_{\Gamma_2}(G))$.
This proves that $f$ is an isomorphism. 
\end{proof}



 

A \textit{$\Gamma$-core} of an $(n,m)$-graph $G$ is a subgraph $H$ of $G$ such that  
$G \xrightarrow{\Gamma} H$,  whereas $H$ does not admit a $\Gamma$-homomorphism to any of its proper subgraphs.

\begin{theorem}
The core of an $(n,m)$-graph $G$ is unique up to $\Gamma$-isomorphism. 
\end{theorem}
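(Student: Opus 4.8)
The plan is to mimic the classical argument for uniqueness of cores of graphs (as in Hell–Nešetřil), adapting it to the $\Gamma$-homomorphism setting and using the fact that $\Gamma$-homomorphisms compose. First I would record the basic lemma that $\Gamma$-homomorphisms compose: if $G \xrightarrow{\Gamma} H$ and $H \xrightarrow{\Gamma} K$, then $G \xrightarrow{\Gamma} K$. (This should follow because a $\Gamma$-homomorphism $G\to H$ factors through a $\langle e\rangle$-homomorphism $G'\to H$ for some $\Gamma$-equivalent $G'$, and one can then switch $H$ appropriately; alternatively it is immediate from Proposition~\ref{prop chromatic num} together with the observation that $\rho_\Gamma$ is functorial with respect to $\langle e\rangle$-homomorphisms. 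If this composition statement is not stated earlier I would insert it as a short preliminary remark.) I would also note the trivial fact that every $(n,m)$-graph admits at least one $\Gamma$-core, obtained by repeatedly passing to a proper subgraph that it $\Gamma$-maps onto, a process that terminates since $G$ is finite.

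Next, suppose $H_1$ and $H_2$ are both $\Gamma$-cores of $G$. Since $H_i$ is a subgraph of $G$ with $G \xrightarrow{\Gamma} H_i$, and $H_{3-i}$ is a subgraph of $G$ hence $H_{3-i} \xrightarrow{\Gamma} G$ by the identity map (a $\Gamma$-homomorphism via the empty sequence of switches), composition gives $H_1 \xrightarrow{\Gamma} H_2$ and $H_2 \xrightarrow{\Gamma} H_1$. Let $f : H_1 \xrightarrow{\Gamma} H_2$ and $g : H_2 \xrightarrow{\Gamma} H_1$; then $g\circ f : H_1 \xrightarrow{\Gamma} H_1$ is a $\Gamma$-endomorphism. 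The key claim is that any $\Gamma$-endomorphism of a $\Gamma$-core must be a $\Gamma$-isomorphism onto $H_1$ — in particular it is a bijection. This is where the core property is used: if $g\circ f$ were not surjective, its image would be a proper subgraph of $H_1$ to which $H_1$ $\Gamma$-maps, contradicting minimality; and a surjective (hence bijective, $H_1$ being finite) $\Gamma$-endomorphism of $H_1$ is automatically a $\Gamma$-isomorphism, since its inverse map, when composed with enough switches, preserves adjacencies. I would then argue the same way that $g\circ f$ and $f\circ g$ being $\Gamma$-isomorphisms forces $f$ to be a $\Gamma$-isomorphism from $H_1$ to $H_2$: $f$ is injective because $g\circ f$ is, $f$ is surjective because $f\circ g$ is, and the inverse of $f$ differs from $g$ (or from $f^{-1}$ as a set map) only up to a fixed sequence of switches on $H_1$, so it too is a $\Gamma$-homomorphism.

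**The main obstacle** I expect is the step asserting that a bijective $\Gamma$-endomorphism (or more generally a bijective $\Gamma$-homomorphism between two cores) is actually a $\Gamma$-\emph{isomorphism}, i.e.\ that its inverse is again a $\Gamma$-homomorphism. In the plain $(n,m)$-graph case this is the familiar fact that a bijective homomorphism of a core to itself is an automorphism; here one must be careful because the switches that witness $f$ being a $\Gamma$-homomorphism are applied to the source graph, and one needs to produce a compatible sequence of switches witnessing that $f^{-1}$ is a $\Gamma$-homomorphism. I would handle this by tracking the switching sequence explicitly: if $f: H_1' \xrightarrow{\langle e\rangle} H_2$ with $H_1' $ obtained from $H_1$ by a $\sigma_i$-switch at each vertex $v_i$, then reversing, $f^{-1}: H_2 \to H_1'$ is an $\langle e\rangle$-isomorphism, and performing the $\sigma_i^{-1}$-switch on $f^{-1}(v_i)\in V(H_2)$ (using that $\Gamma$ is Abelian / switch-commutative so the switches commute and compose properly) yields a $\Gamma$-equivalent graph $H_2'$ with $f^{-1}: H_2' \xrightarrow{\langle e\rangle} H_1$, exactly as in the proofs of Proposition~\ref{prop chromatic num} and Theorem~\ref{th isomorphism}. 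Once this is in place, combining it with the counting argument (image of a non-injective or non-surjective endomorphism yields a smaller subgraph, contradicting coreness) finishes the proof; the rest is routine bookkeeping of the kind already carried out in the earlier theorems of this section.
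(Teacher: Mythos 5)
Your proposal is correct and follows essentially the same route as the paper: both rest on the observation that a composite $\Gamma$-homomorphism between the two cores must be surjective (and must realize every adjacency of the target), since otherwise precomposing with $G \xrightarrow{\Gamma} H_1$ would send $G$ to a proper subgraph of a core, after which finiteness gives bijectivity and the switch-tracking argument shows the inverse is again a $\Gamma$-homomorphism. The only cosmetic difference is that the paper argues directly on $f_2 \circ i_1 : H_1 \xrightarrow{\Gamma} H_2$ rather than passing through the endomorphisms $g \circ f$ and $f \circ g$.
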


\begin{proof}
Let $H_1$ and $H_2$ be two $\Gamma$-cores of $G$. We have to show that $H_1$ and $H_2$ are $\Gamma$-isomorphic. 

Note that, there exist  $\Gamma$-homomorphisms 
$f_1\colon G \xrightarrow{\Gamma} H_1$
and 
$f_2\colon G \xrightarrow{\Gamma} H_2$ 
as $H_1, H_2$ are $\Gamma$-cores. 
Moreover, there exists the inclusion $\Gamma$-homomorphisms
$i_1\colon H_1 \xrightarrow{\Gamma} G$
and 
$i_2\colon H_2 \xrightarrow{\Gamma} G$. 

Now consider the composition $\Gamma$-homomorphism 
$f_2 \circ i_1\colon H_1 \xrightarrow{\Gamma} H_2$. Note that it must be a surjective vertex mapping. 
Not only that, for any non-adjacent pair $u,v$ of vertices in $H_1$, the vertices $(f_2 \circ i_1)(u)$ and $(f_2 \circ i_1)(v)$ are non-adjacent in $H_2$. The reason is that, if the above two conditions are not satisfied, then the composition $\Gamma$-homomorphism $f_2 \circ i_1 \circ f_1\colon G \xrightarrow{\Gamma} H_2$ can be considered as a $\Gamma$-homomorphism to a proper subgraph of $H_2$. This will contradict the fact that $H_2$ is a $\Gamma$-core. Therefore, $f_2 \circ   i_1$ is a bijective $\Gamma$-homomorphism whose inverse is also a $\Gamma$-homomorphism. In other words, $f_2 \circ i_1$ is a $\Gamma$-isomorphism. 
\end{proof}

Due to the above theorem, it is possible to define the $\Gamma$-core of $G$ and let us denote it by $core_{\Gamma}(G)$. Notice that, this is the analogue of the fundamental algebraic concept of core in the study of graph homomorphism. 

\section{Categorical products}\label{sec category}
Taking the set of $(n,m)$-graphs as objects and 
their $\Gamma$-homomorphisms as morphisms, one can consider the category of $(n,m)$-graphs with respect to $\Gamma$-homomorphism. In this section, we study whether products and co-products exist in this category or not. 
The existence of categorical product and co-product will not only contribute in establishing the category of $(n,m)$-graphs with respect to $\Gamma$-homomorphism as a richly structured category, but it will also show that
the lattice of $(n,m)$-graphs induced by $\Gamma$-homomorphisms is a distributive lattice with the categorical products and co-products playing the roles of join and meet, respectively. Moreover,
categorical product~\cite{hell2004graphs} was useful in proving the density theorem~\cite{nevsetvril2002density} for undirected and directed graphs. Thus, it is not wrong to hope that it 
may become useful to prove the analogue of the density theorem in our context. It is worth commenting that the the idea to prove the existence of categorical products in this context generalizes the idea of the same in the context of signed graphs from~\cite{senorder}.

Before proceeding further with the results, let us recall what categorical product and co-product mean in our context. 
Let $G, H$ be two $(n,m)$-graphs and let $\Gamma \subseteq S_{2n+m}$ be an Abelian group.

The \textit{categorical product} of $G$ and $H$ with respect to 
$\Gamma$-homomorphism is an $(n,m)$-graph $P$ having two projection mappings of the form 
$f_g\colon P \xrightarrow{\Gamma} G$ and $f_h\colon P \xrightarrow{\Gamma} H$ satisfying the following universal property: if any $(n,m)$-graph $P'$
admit $\Gamma$-homomorphisms 
$\phi_g\colon P' \xrightarrow{\Gamma} G$ 
and 
$\phi_h\colon P' \xrightarrow{\Gamma} H$, then there exists a unique 
$\Gamma$-homomorphism $\varphi\colon P' \xrightarrow{\Gamma} P$ such that 
$\phi_g = f_g \circ \varphi$ and $\phi_h = f_h \circ \varphi$. Refer Figure~\ref{Fig: product} for its commutative diagram. 

The \textit{categorical co-product} of $G$ and $H$ with respect to 
$\Gamma$-homomorphism is an $(n,m)$-graph $C$ along with the two inclusion mappings  of the form 
$i_g\colon G \xrightarrow{\Gamma} C$ and $i_h\colon H \xrightarrow{\Gamma} C$ satisfying the following universal property: 
if for any $(n,m)$-graph $C'$
there are $\Gamma$-homomorphisms  
$\phi_g\colon G \xrightarrow{\Gamma} C'$ 
and 
$\phi_h\colon H \xrightarrow{\Gamma} C'$, then there exists a unique 
$\Gamma$-homomorphism $\varphi\colon C \xrightarrow{\Gamma} C'$ such that 
$\phi_g = \varphi \circ  i_g$ and $\phi_h =  \varphi  \circ i_h$. Refer Figure~\ref{Fig: coproduct} for its commutative diagram.
\begin{figure}
\begin{center}
    \tikzset{column sep=small, ampersand replacement=\&}
\begin{floatrow}
    \centering
    \ffigbox{\begin{tikzcd}
\& G 
		\&
		\&[1.5em] \\
		P \ar[ur, "f_g"] \ar[dr, "f_h"'] 
		\&
		\& P' \ar[ul, "\phi_g" '] \ar[dl, "\phi_h"] \ar[ll, "\varphi" ']
		\&  \\
		\& H 
		\&
		\&
	\end{tikzcd}}{\caption{Product of $G$ and $H.$} \label{Fig: product}}
     \ffigbox{\begin{tikzcd}
     \& G \ar[dl, "i_g" '] \ar[dr, "\phi_g"]
		\&
		\&[1.5em] \\
		C  \ar[rr, "\varphi"]
		\&
		\& C'  
		\&  \\
		\& H \ar[ul, "i_h"] \ar[ur, "\phi_h" ']
		\&
		\&
	\end{tikzcd}}{\caption{Co-product of $G$ and $H.$}\label{Fig: coproduct}}
\end{floatrow}
\end{center}
\end{figure}

Let $G, H$ be two $(n,m)$-graphs and let $\Gamma \subseteq S_{2n+m}$ be a switch-commutative group. Then  
$G \times_{\langle e \rangle} H$ denotes 
the $(n,m)$-graph on the vertex set  $V(G) \times V(H)$, where 
$(u,v)$ is a $t$-neighbor of $(u',v')$ in $G \times_{\langle e \rangle} H$
if and only if $u$ is a $t$-neighbor of $u'$ in $G$ and $v$ is a $t$-neighbor of $v'$ in $H$. 
Moreover, the $(n,m)$-graph 
$G \times_{\Gamma} H$ is the subgraph of 
$\rho_{\Gamma}(G) \times_{\langle e \rangle} \rho_{\Gamma}(H)$ induced by the  set of vertices 
$$X = \{(u^{\sigma}, v^{\sigma})\colon (u,v) \in V(G) \times V(H) \text{ and }  \sigma \in \Gamma\}.$$ 

\begin{theorem}\label{th product existence}
The categorical product of $(n,m)$-graphs $G$ and $H$ with respect to $\Gamma$-homomorphism exists and is $\Gamma$-isomorphic to 
$G \times_{\Gamma} H$, where $\Gamma$ is a switch-commutative group.
\end{theorem}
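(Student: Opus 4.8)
## Proof proposal

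The plan is to reduce the $\Gamma$-homomorphism statement to the already-established $\langle e \rangle$-theory via Proposition~\ref{prop chromatic num}, and then verify the universal property by hand on the explicit construction $G \times_{\Gamma} H$. First I would record the two projection maps: since $X \subseteq V(\rho_{\Gamma}(G)) \times V(\rho_{\Gamma}(H))$, the coordinate projections $\pi_1 \colon (u^\sigma,v^\sigma) \mapsto u^\sigma$ and $\pi_2 \colon (u^\sigma,v^\sigma) \mapsto v^\sigma$ land in $\rho_{\Gamma}(G)$ and $\rho_{\Gamma}(H)$ respectively. By the definition of $G \times_{\langle e \rangle} H$ these are $\langle e \rangle$-homomorphisms from $G \times_{\Gamma} H$ to $\rho_{\Gamma}(G)$ and $\rho_{\Gamma}(H)$; composing with the identity-labelled inclusion $G \hookrightarrow \rho_{\Gamma}(G)$ being a $\Gamma$-homomorphism (the $v^e$ copy is $G$ itself) and invoking Proposition~\ref{prop chromatic num} in the direction ``$P \xrightarrow{\langle e \rangle} \rho_\Gamma(G)$ implies $P \xrightarrow{\Gamma} G$'', I get the required $f_g \colon G \times_{\Gamma} H \xrightarrow{\Gamma} G$ and $f_h \colon G \times_{\Gamma} H \xrightarrow{\Gamma} H$.

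Next I would check the universal property. Suppose $P'$ is any $(n,m)$-graph with $\phi_g \colon P' \xrightarrow{\Gamma} G$ and $\phi_h \colon P' \xrightarrow{\Gamma} H$. By Proposition~\ref{prop chromatic num} these are the same data as $\langle e \rangle$-homomorphisms $\widehat{\phi_g} \colon P' \xrightarrow{\langle e \rangle} \rho_{\Gamma}(G)$ and $\widehat{\phi_h} \colon P' \xrightarrow{\langle e \rangle} \rho_{\Gamma}(H)$; however, I must be careful that the \emph{same} switching vector on $P'$ works for both — this is exactly the subtlety the vertex set $X$ is designed to handle. Concretely, from $\phi_g$ one obtains a $\Gamma$-equivalent graph $P''$ of $P'$ and a map to $G$; I would argue that the \emph{same} $P''$ can be used for $\phi_h$, because $\langle e \rangle$-homomorphisms from $P''$ to $\rho_{\Gamma}(G)$ and to $\rho_{\Gamma}(H)$ are obtained by forgetting the $\Gamma$-superscripts, and the superscripts can be chosen consistently by post-composing with the switching bijections $\rho_\Gamma(G) \to \rho_\Gamma(G)$ that permute the $\Gamma$-layers. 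Once both $\langle e \rangle$-homomorphisms $P'' \to \rho_\Gamma(G)$ and $P'' \to \rho_\Gamma(H)$ use the same underlying $P''$, the pair $(\widehat{\phi_g}, \widehat{\phi_h})$ defines a single map $w \mapsto (\widehat{\phi_g}(w), \widehat{\phi_h}(w))$ into $\rho_\Gamma(G) \times_{\langle e \rangle} \rho_\Gamma(H)$, which is an $\langle e \rangle$-homomorphism by the definition of $\times_{\langle e \rangle}$; I then need its image to lie inside $X$, i.e.\ both coordinates carry the same superscript $\sigma$, which is precisely the consistency arranged above. Translating back through Proposition~\ref{prop chromatic num} gives a $\Gamma$-homomorphism $\varphi \colon P' \xrightarrow{\Gamma} G \times_{\Gamma} H$ with $f_g \circ \varphi = \phi_g$ and $f_h \circ \varphi = \phi_h$.

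For uniqueness I would use that the projections $f_g, f_h$, read as $\langle e \rangle$-homomorphisms into $\rho_\Gamma(G)$ and $\rho_\Gamma(H)$, are jointly injective on vertices of $X$ (a vertex $(u^\sigma, v^\sigma)$ of $X$ is determined by its two coordinates), so any two $\Gamma$-homomorphisms $\varphi, \varphi'$ satisfying $f_g \circ \varphi = f_g \circ \varphi'$ and $f_h \circ \varphi = f_h \circ \varphi'$ must agree as vertex maps. Finally, since categorical products are unique up to isomorphism whenever they exist, the abstract categorical product is $\Gamma$-isomorphic to $G \times_{\Gamma} H$; alternatively one can cite Theorem~\ref{th isomorphism} to see that ``$\Gamma$-isomorphic'' is the right equivalence here.

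The main obstacle I anticipate is the consistency step in the middle paragraph: given two independent $\Gamma$-homomorphisms out of $P'$, each ``chooses'' a switching vector on $P'$, and a priori these need not match, so one must show the choices can be synchronized so that the combined map lands in the diagonal-type vertex set $X$ rather than in the full product $\rho_\Gamma(G) \times_{\langle e \rangle} \rho_\Gamma(H)$. Making this precise — tracking how a $\sigma$-switch at a vertex of $P'$ interacts simultaneously with the two target graphs and using switch-commutativity (Lemma~\ref{lem convention}, Corollary~\ref{cor adjacency}) to reconcile them — is where the real work lies; everything else is bookkeeping with Proposition~\ref{prop chromatic num}.
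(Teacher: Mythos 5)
You have correctly isolated the crux of the argument --- the synchronization of the two switching vectors on $P'$ so that the combined map lands in the diagonal set $X$ --- but the resolution you sketch does not work, and this is a genuine gap. Post-composing with a layer-permuting bijection of $\rho_{\Gamma}(G)$ changes the superscript of \emph{both} endpoints of every edge by left-multiplication, so it can never reconcile two witnesses $\tau$ and $\mu$ whose ``edge actions'' $t \mapsto \overline{\tau_q(\overline{\tau_p(t)})}$ and $t \mapsto \overline{\mu_q(\overline{\mu_p(t)})}$ differ on some edge $pq$. Concretely, take $(n,m)=(1,0)$, $\Gamma=\{e,(1\,2)\}$ (the push operation, which is switch-commutative), let $G$ be a single arc from $g_1$ to $g_2$, and let $P'$ be a single arc from $p$ to $q$. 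Then $\phi_g(p)=g_1,\ \phi_g(q)=g_2$ is a $\Gamma$-homomorphism witnessed by the identity switching, and $\phi_h(p)=g_2,\ \phi_h(q)=g_1$ is a $\Gamma$-homomorphism witnessed by pushing $p$; the two witness sets, $\{(e,e),(r,r)\}$ and $\{(e,r),(r,e)\}$, are disjoint. Moreover, any $\varphi$ with $f_g\circ\varphi=\phi_g$ and $f_h\circ\varphi=\phi_h$ is forced to send $p\mapsto(g_1^{\sigma},g_2^{\sigma})$ and $q\mapsto(g_2^{\sigma'},g_1^{\sigma'})$, and in $\rho_{\Gamma}(G)\times_{\langle e\rangle}\rho_{\Gamma}(G)$ the first coordinates meet in color $\overline{\sigma'(\overline{\sigma(2)})}$ while the second coordinates meet in color $\overline{\sigma'(\overline{\sigma(1)})}$; these are always distinct, so the two image vertices are non-adjacent in $G\times_{\Gamma}G$ for every $\sigma,\sigma'$, while $p$ and $q$ stay adjacent in every $\Gamma$-equivalent copy of $P'$. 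So no mediating $\Gamma$-homomorphism exists for this pair.

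For comparison, the paper's proof takes the same shortcut: it simply sets $\phi(p)=(\phi_g(p),\phi_h(p))$ without specifying the superscript or verifying edge-compatibility, which is exactly the step your counterexample obstructs; so your proposal is not weaker than the printed argument, it merely makes the hidden difficulty explicit. The computation above indicates that the diagonal $X=\{(u^{\sigma},v^{\sigma})\}$ is the wrong vertex set: its induced subgraph only pairs an edge of $G$ with an edge of $H$ of the \emph{same} color, whereas the universal property requires pairing edges whose colors differ by the action of $\Gamma$. The natural candidate that your own reduction via Proposition~\ref{prop chromatic num} produces is the full product $\rho_{\Gamma}(G)\times_{\langle e\rangle}\rho_{\Gamma}(H)$, or more economically one layer of it such as $\rho_{\Gamma}(G)\times_{\langle e\rangle}H$ (which also has $|\Gamma|\cdot|V(G)|\cdot|V(H)|$ vertices and, in the $(0,2)$ case, recovers the known categorical product of signed graphs). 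If you rebuild the proof around that object, the synchronization problem disappears because the second coordinate absorbs the discrepancy $\mu_p\tau_p^{-1}$; as written, however, the middle paragraph of your argument cannot be repaired.
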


\begin{proof}
Let $(G \times_{\Gamma} H)'$ be $\Gamma$-switched graph of $G \times_{\Gamma} H$, where we apply $\sigma^{-1}$ on $(u^{\sigma},v^{\sigma}) \in V(G \times_{\Gamma} H)$. Thus, we define $f_g(u^{\sigma},v^{\sigma}) = u$ and $f_h(u^{\sigma},v^{\sigma})= v$ as the two projections. Observe that $f_g$ and $f_h$ are $\langle e \rangle$-homomorphisms of 
$(G \times_{\Gamma} H)'$ to $G$ and $H$, respectively. If there exists an $(n,m)$-graph $P'$ such that, $\phi_g\colon P' \xrightarrow{\Gamma} G$ and $\phi_h\colon P' \xrightarrow{\Gamma} H$, then define $\phi\colon P' \xrightarrow{\Gamma} G \times_{\Gamma} H$ such that $\phi(p) = (\phi_g(p), \phi_h(p))$. From the definition of $\phi$, we have $\phi_g = f_g \circ \phi$ and $\phi_h = f_h \circ \phi$. Note that this is the unique way we can define $\phi$ which satisfies the universal property from the definition of products. Thus, $G \times_{\Gamma} H$ is indeed the categorical product of $G$ and $H$ with respect to $\Gamma$-homomorphism once we prove its uniqueness up to 
$\Gamma$-isomorphism. 

Suppose $P_1$ with projection mappings $f_g, f_h$ and $P_2$ with projection mappings $\phi_g, \phi_h$ are two $(n,m)$-graphs that satisfy the universal properties of categorical product of $G$ and $H$, then there exists $\varphi\colon P_1 \xrightarrow{\Gamma} P_2$ and $\varphi'\colon P_2 \xrightarrow{\Gamma} P_1$ with $\phi_g \circ \varphi = f_g$, $\phi_h \circ \varphi = f_h$ and $f_g \circ \varphi' = \phi_g$, $f_h \circ \varphi' = \phi_h$. Now consider the composition, $\varphi' \circ \varphi\colon P_1 \xrightarrow{\Gamma} P_1$. Since, $f_g \circ (\varphi' \circ \varphi) = (f_g \circ \varphi') \circ \varphi 
= \phi_g \circ \varphi 
= f_g ~\text{and}~ f_h \circ (\varphi' \circ \varphi) = (f_h \circ \varphi') \circ \varphi = \phi_h \circ \varphi' 
= f_h$

We can conclude that $\varphi' \circ \varphi$ is a  identity homomorphism on $P_1$. Similarly $\varphi \circ \varphi'$ must be the identity homomorphism on $P_2$. Thus implying, $\varphi' = \varphi^{-1}$ is an $\Gamma$-isomorphism of $P_2$ and $P_1$. 
\end{proof}

In PEPS 2012 workshop, Brewster asked whether 
Categorical product exists for signed graphs or not. The above theorem answers this question in affirmative as a special case (with respect to $\Gamma$-homomorphisms of $(0,2)$-graphs, where $\Gamma$ is non-trivial).

\begin{corollary}
For any $(n,m)$-graphs $G$ and $H$, and a switch-commutative group $\Gamma \subseteq S_{2n+m}$, we have  $\rho_{\Gamma}(G \times_{\Gamma} H) \equiv_{\langle e \rangle} \rho_{\Gamma}(G) \times_{\langle e \rangle} \rho_{\Gamma}(H)$. 
\end{corollary}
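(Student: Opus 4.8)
The goal is to show $\rho_{\Gamma}(G \times_{\Gamma} H) \equiv_{\langle e \rangle} \rho_{\Gamma}(G) \times_{\langle e \rangle} \rho_{\Gamma}(H)$. My plan is to exhibit an explicit bijection between the two vertex sets and check it preserves adjacency in both directions, using the identities already established in Lemma~\ref{lem convention} and Corollary~\ref{cor adjacency}. First, I would unwind both sides. By definition, $G \times_{\Gamma} H$ has vertex set $X = \{(u^{\sigma}, v^{\sigma}) : (u,v) \in V(G)\times V(H),\ \sigma \in \Gamma\}$, so $\rho_{\Gamma}(G \times_{\Gamma} H)$ has vertices $(u^{\sigma}, v^{\sigma})^{\tau}$ for $\sigma, \tau \in \Gamma$. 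On the other side, $\rho_{\Gamma}(G) \times_{\langle e \rangle} \rho_{\Gamma}(H)$ has vertices $(u^{\alpha}, v^{\beta})$ for arbitrary $\alpha, \beta \in \Gamma$. The natural candidate map $\Phi$ sends $(u^{\sigma}, v^{\sigma})^{\tau} \mapsto (u^{\sigma\tau}, v^{\sigma\tau})$ — but that is not a bijection onto the larger set, so the correct reading is that $\rho_{\Gamma}(G\times_\Gamma H)$, which has $|\Gamma|\cdot|X| = |\Gamma|^2 |V(G)||V(H)|$ vertices, should match $\rho_\Gamma(G)\times_{\langle e\rangle}\rho_\Gamma(H)$, which has $|\Gamma|^2|V(G)||V(H)|$ vertices as well; so I would define $\Phi\big((u^{\sigma},v^{\sigma})^{\tau}\big) = (u^{\sigma\tau},\, v^{\sigma\tau})$ only after checking the counts agree, and argue injectivity/surjectivity by noting that $(\sigma,\tau)\mapsto(\sigma\tau,\sigma\tau)$ together with the free index already parametrizes exactly the diagonal-then-independent structure — more carefully, the pair $(\alpha,\beta)$ in the target is hit by choosing $\sigma$ with $u^\sigma$ arbitrary and then $\tau = \sigma^{-1}\alpha$, which forces $v^{\sigma\tau}=v^{\alpha}$, so actually the target pair must satisfy $\alpha$ determined but $\beta$ free — this needs the observation that in $\rho_\Gamma(G)$ the vertices $u^{\alpha}$ for different $\alpha$ are genuinely distinct copies, so I will instead use the cleaner map $(u^{\sigma},v^{\sigma})^{\tau} \mapsto (u^{\sigma\tau}, v^{\sigma\tau})$ and verify it is well-defined and bijective by a direct counting plus injectivity check.

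\textbf{Key steps.} Step 1: write out the adjacency in $\rho_{\Gamma}(G \times_{\Gamma} H)$. A vertex $(u^{\sigma},v^{\sigma})^{\tau}$ is adjacent to $(u'^{\sigma'},v'^{\sigma'})^{\tau'}$ precisely when, in $G\times_\Gamma H$, $(u^{\sigma},v^{\sigma})$ is a $t$-neighbor of $(u'^{\sigma'},v'^{\sigma'})$ for the color $t$ obtained by the double-switch formula of Corollary~\ref{cor adjacency} applied with $\tau,\tau'$; and that in turn unwinds, via the definition of $G\times_\Gamma H$ as an induced subgraph of $\rho_\Gamma(G)\times_{\langle e\rangle}\rho_\Gamma(H)$, to the simultaneous conditions that $u^{\sigma}$ is a $t$-neighbor of $u'^{\sigma'}$ in $\rho_\Gamma(G)$ and $v^{\sigma}$ is a $t$-neighbor of $v'^{\sigma'}$ in $\rho_\Gamma(H)$, each of which by Corollary~\ref{cor adjacency} translates to an adjacency in $G$ (resp.\ $H$) modified by $\sigma,\sigma'$. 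Step 2: write out the adjacency in $\rho_{\Gamma}(G)\times_{\langle e\rangle}\rho_{\Gamma}(H)$ between $(u^{\alpha},v^{\beta})$ and $(u'^{\alpha'},v'^{\beta'})$: by definition of $\times_{\langle e \rangle}$ it is the conjunction of ``$u^{\alpha}$ is a $t$-neighbor of $u'^{\alpha'}$ in $\rho_\Gamma(G)$'' and ``$v^{\beta}$ is a $t$-neighbor of $v'^{\beta'}$ in $\rho_\Gamma(H)$'' for a common $t$. Step 3: substitute $\alpha=\sigma\tau$, $\beta=\sigma\tau$ (the image under $\Phi$) and, using that $\Gamma$ is Abelian so $\sigma\tau=\tau\sigma$, match the nested-bar color expressions from Step 1 and Step 2 term by term; the repeated-bar cancellations are exactly those already performed in the proof of Proposition~\ref{prop chromatic num} and Theorem~\ref{th isomorphism}, so this is bookkeeping with Lemma~\ref{lem convention}. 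Step 4: conclude $\Phi$ is an adjacency-preserving bijection whose inverse (read off by the same substitution) is also adjacency-preserving, hence an $\langle e \rangle$-isomorphism.

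\textbf{Main obstacle.} The genuine difficulty is not the color-cancellation algebra — that is mechanical given Lemma~\ref{lem convention} and the Abelian hypothesis — but getting the \emph{bijection} right: the two sides both have $|\Gamma|^2|V(G)||V(H)|$ vertices, yet $\rho_\Gamma(G\times_\Gamma H)$ is built by first restricting to the diagonal set $X$ and \emph{then} fattening by $\Gamma$, while $\rho_\Gamma(G)\times_{\langle e\rangle}\rho_\Gamma(H)$ fattens each factor independently \emph{first} and then takes a full product. Showing these two operations produce $\langle e\rangle$-isomorphic graphs requires checking that the map $((u,v),\sigma,\tau)\mapsto(u^{\sigma\tau},v^{\sigma\tau})$ — which visibly loses the separate values of $\sigma$ and $\tau$, retaining only $\sigma\tau$ — is nonetheless injective once we also remember which diagonal copy we started in; the cleanest fix is to observe that on $X$ the superscript $\sigma$ is \emph{not} an intrinsic label (the vertex $(u^\sigma,v^\sigma)$ of $G\times_\Gamma H$ is a specific $(n,m)$-graph vertex, and distinct $\sigma$ can a priori give $\Gamma$-isomorphic but formally distinct vertices), so I must set up $\Phi$ on the formal vertex set $\{(u,v)\}\times\Gamma\times\Gamma$ and verify that the composite ``diagonal index $\sigma$, outer index $\tau$'' $\mapsto$ ``left index $\sigma\tau$, right index $\sigma\tau$'' is replaced by the honest bijection ``$(\sigma,\tau)\mapsto(\sigma\tau,\tau)$'' — sending $(u^\sigma,v^\sigma)^\tau$ to $(u^{\sigma\tau}, v^{\tau})$ (or a symmetric variant), which \emph{is} a bijection of $\Gamma\times\Gamma$ — and then re-run Steps 1–3 with this corrected target indexing. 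Pinning down which of these index maps makes \emph{both} projections and \emph{all} adjacencies line up is the one step I would do with care rather than by analogy.
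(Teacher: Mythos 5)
Your plan to replace the paper's one-line citation of Theorems~\ref{th isomorphism} and~\ref{th product existence} with an explicit vertex bijection is a genuinely different route, and you correctly identify the index bookkeeping as the crux --- but that is exactly where the argument breaks, and no choice of index map repairs it. The adjacency computation forces the map you discard: since successive switches at the same vertex compose, the colour between $(u^{\sigma},v^{\sigma})^{\tau}$ and $(u'^{\sigma'},v'^{\sigma'})^{\tau'}$ in $\rho_{\Gamma}(G\times_{\Gamma}H)$ is $\overline{(\sigma\tau)(\overline{(\sigma'\tau')(t)})}$, so matching colours on the other side requires \emph{both} coordinates of the image to carry the index $\sigma\tau$. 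Your corrected map $(u^{\sigma},v^{\sigma})^{\tau}\mapsto(u^{\sigma\tau},v^{\tau})$ is a bijection of indices but sends edges to non-edges: take $\tau=\tau'=e$ and $\sigma\neq\sigma'$, so the first coordinate sees colour $\overline{\sigma(\overline{\sigma'(t)})}$ while the second sees $t$, and these differ in general. The only adjacency-compatible index map is the diagonal one $(\sigma,\tau)\mapsto(\sigma\tau,\sigma\tau)$, which is $|\Gamma|$-to-one.

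The obstruction is not fixable by cleverer indexing, because the two graphs can have different numbers of edges. Take $(n,m)=(0,2)$, $\Gamma=\{e,(1\,2)\}$, and let $G=H$ be a single edge of colour $1$. Then $G\times_{\Gamma}H$ has $8$ vertices and $8$ edges, so $\rho_{\Gamma}(G\times_{\Gamma}H)$ has $|\Gamma|^{2}\cdot 8=32$ edges, whereas $\rho_{\Gamma}(G)\times_{\langle e\rangle}\rho_{\Gamma}(H)$ has $16$ vertices but only $16$ edges (its diagonal and anti-diagonal halves are disjoint, each carrying $8$ edges). Since any $\langle e\rangle$-isomorphism preserves adjacency, no bijection can work, so the statement as written cannot be proved under the paper's definitions. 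Note that the paper's own proof consists only of citing Theorems~\ref{th isomorphism} and~\ref{th product existence}; Theorem~\ref{th isomorphism} would require exhibiting the right-hand side as $\rho_{\Gamma}(W)$ for some $W\equiv_{\Gamma}G\times_{\Gamma}H$, which the example above rules out. Your more careful, explicit approach is precisely what exposes this: carried to completion, your computation yields a counterexample rather than a proof, and the honest outcome of your Step~4 should be to report that discrepancy.
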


\begin{proof}
    The proof follows from Theorem~\ref{th isomorphism} and Theorem~\ref{th product existence}. 
\end{proof}





Let $G+H$ denote the disjoint union of the $(n,m)$-graphs $G$ and $H$. 

\begin{theorem}\label{th coproduct existence}
The categorical co-product of $(n,m)$-graphs $G$ and $H$ with respect to $\Gamma$-homomorphism exists and is $\Gamma$-isomorphic to 
$G + H$, where $\Gamma$ is any subgroup of $S_{2n+m}$.
\end{theorem}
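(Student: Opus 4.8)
The plan is to verify directly that the disjoint union $G+H$, together with the natural inclusion maps $i_g \colon G \xrightarrow{\Gamma} C$ and $i_h \colon H \xrightarrow{\Gamma} C$ (which send each vertex to the corresponding copy of itself in $G+H$), satisfies the universal property of the categorical co-product, and then deduce uniqueness up to $\Gamma$-isomorphism by the standard abstract-nonsense argument. First I would observe that each inclusion is literally an $\langle e \rangle$-homomorphism, hence a $\Gamma$-homomorphism, since no switching is needed and adjacencies inside each summand are preserved on the nose. This part requires no hypothesis on $\Gamma$ at all, which is why the statement holds for an arbitrary subgroup of $S_{2n+m}$.

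Next I would check existence of the mediating morphism. Suppose $C'$ is any $(n,m)$-graph with $\Gamma$-homomorphisms $\phi_g \colon G \xrightarrow{\Gamma} C'$ and $\phi_h \colon H \xrightarrow{\Gamma} C'$; by definition of $\Gamma$-homomorphism there are $\Gamma$-equivalent graphs $G'$ of $G$ and $H'$ of $H$ on which $\phi_g$, $\phi_h$ become ordinary $\langle e \rangle$-homomorphisms. Since the vertex sets of $G$ and $H$ are disjoint in $G+H$, I can define $\varphi \colon G+H \to C'$ by $\varphi|_{V(G)} = \phi_g$ and $\varphi|_{V(H)} = \phi_h$, and take as the witnessing $\Gamma$-equivalent graph of $G+H$ the disjoint union $G' + H'$ (switching vertices of the $G$-copy exactly as in $G'$ and vertices of the $H$-copy exactly as in $H'$; these switches do not interact as there are no edges between the two parts). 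Then every adjacency of $G'+H'$ lies entirely within one part and is preserved by the corresponding $\phi$, so $\varphi$ is a $\Gamma$-homomorphism, and $\phi_g = \varphi \circ i_g$, $\phi_h = \varphi \circ i_h$ by construction. Uniqueness of $\varphi$ as a vertex map is forced on each part by these two equations, so $\varphi$ is the unique such $\Gamma$-homomorphism.

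Finally, uniqueness of the co-product up to $\Gamma$-isomorphism follows verbatim from the argument already used for the product in Theorem~\ref{th product existence}: if $C_1$ and $C_2$ both satisfy the universal property with inclusions $(i_g,i_h)$ and $(j_g,j_h)$ respectively, the universal property yields $\varphi \colon C_1 \xrightarrow{\Gamma} C_2$ and $\varphi' \colon C_2 \xrightarrow{\Gamma} C_1$ with $\varphi \circ i_g = j_g$, $\varphi \circ i_h = j_h$ and symmetrically; composing, $\varphi' \circ \varphi$ mediates $C_1$ to itself compatibly with $i_g, i_h$, so by the uniqueness clause it equals the identity $\Gamma$-homomorphism, and likewise $\varphi \circ \varphi' = \mathrm{id}_{C_2}$, whence $\varphi$ is a $\Gamma$-isomorphism. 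I do not expect a serious obstacle here; the only point needing a little care is confirming that switches performed on the two disjoint parts can be combined into a single $\Gamma$-equivalent graph of $G+H$ witnessing $\varphi$, which is immediate since there are no cross-edges, so the verification is essentially bookkeeping rather than a genuine difficulty.
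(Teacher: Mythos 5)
Your proposal is correct and follows essentially the same route as the paper's own proof: define the mediating map piecewise from $\phi_g$ and $\phi_h$ on the two disjoint parts, verify the universal property, and conclude uniqueness up to $\Gamma$-isomorphism by the standard composition argument. Your explicit remark that the witnessing $\Gamma$-equivalent graph can be taken to be $G'+H'$ (since the two parts have no cross-edges) is a small point the paper leaves implicit, but it is the same argument.
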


\begin{proof}
	Consider the inclusion mapping $i_g\colon G \xrightarrow{\Gamma} G + H$, and $i_h\colon  H \xrightarrow{\Gamma} G + H$. Suppose there exists an $(n,m)$-graph $C$ and there are $\Gamma$-homomorphisms $\phi_g\colon G \xrightarrow{\Gamma} C$ 
and $\phi_h\colon H \xrightarrow{\Gamma} C$, then there exists a  $\Gamma$-homomorphism $\varphi\colon G + H \xrightarrow{\Gamma} C$ such that 
\[ \varphi(x) = \begin{cases}
\phi_g(x) &\text{ if } x \in V(G), \\
\phi_h(x) &\text{ if } x \in V(H).

\end{cases} \]

Observe that, $\phi_g = \varphi \circ  i_g$ and $\phi_h =  \varphi  \circ i_h$. Note that such a $\varphi$ is unique. 

Suppose we have $P$ with $\Gamma$-homomorphisms $f_g, f_h$ and $P'$ with $\Gamma$-homomorphisms $\phi_g, \phi_h$ satisfying the universal property of categorical co-product of $G$ and $H$, then there exists $\Gamma$-homomorphisms, $\varphi\colon P \xrightarrow{\Gamma} P'$ and $\varphi' : P' \xrightarrow{\Gamma} P$ with $\phi_g = \varphi \circ f_g$, $\phi_h = \varphi \circ f_h$ and $f_g = \varphi' \circ \phi_g$, $f_h = \varphi' \circ \phi_h$. Now consider the composition, $\varphi' \circ \varphi\colon P \xrightarrow{\Gamma} P$. Since, $(\varphi' \circ \varphi) \circ f_g = \varphi' \circ ( \varphi \circ f_g) = \varphi' \circ \phi_g = f_g$ and $(\varphi' \circ \varphi) \circ f_h =  \varphi' \circ (\varphi  \circ f_h) = \varphi' \circ \phi_h = f_h$
 We can say that $\varphi' \circ \varphi$ is the identity mapping on $P$. Similarly $\varphi \circ \varphi'$ must be the identity mapping on $P$. Thus implying, $\varphi' = \varphi^{-1}$ is an $\Gamma$-isomorphism of $P'$ and $P$. Therefore, we have, the categorical co-product of $(n,m)$-graphs $G$ and $H$ with respect to $\Gamma$-homomorphism is $G + H$. 
\end{proof}

Thus both categorical product and co-product exists with respect to $\Gamma$-homomorphism when $\Gamma$ is a switch-commutative group. Furthermore, the usual algebraic identities hold with respect to these operations too.

\begin{corollary}
For any $(n,m)$-graphs $G$ and $H$, and a switch-commutative group $\Gamma \subseteq S_{2n+m}$, we have 
$\rho_{\Gamma}(G + H) \equiv_{\langle e \rangle} \rho_{\Gamma}(G) + \rho_{\Gamma}(H)$. 
\end{corollary}

\begin{proof}
    The proof follows from
    Theorem~\ref{th isomorphism} and Theorem~\ref{th coproduct existence}.  
\end{proof}

\begin{theorem}
For any $(n,m)$-graphs $G, H, K$ and $\Gamma$, a switch-commutative group, We have the following. 
\begin{enumerate}[(i)]
    \item $G \times_{\Gamma} H \equiv_{\Gamma} H \times_{\Gamma} G$,
    
    \item $G \times_{\Gamma}(H \times_{\Gamma} K) \equiv_{\Gamma} (G \times_{\Gamma} H) \times_{\Gamma} K$,
   
    \item $G \times_{\Gamma}(H + K) \equiv_{\Gamma} (G \times_{\Gamma} H ) + (G \times_{\Gamma} K)$.
\end{enumerate}

\end{theorem}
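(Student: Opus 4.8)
The plan is to derive all three identities from the universal property of the categorical product (Theorem~\ref{th product existence}) and co-product (Theorem~\ref{th coproduct existence}), together with the uniqueness-up-to-$\Gamma$-isomorphism that accompanies each universal property, rather than by manipulating vertex sets directly. The uniform strategy is: exhibit the requisite projection (or inclusion) $\Gamma$-homomorphisms out of (into) one side, verify the universal property, and invoke uniqueness to conclude a $\Gamma$-isomorphism with the canonically-constructed object on the other side. Since $\Gamma$ is switch-commutative, every categorical product invoked is realized concretely as $G \times_{\Gamma} H$, so no existence issue arises at any intermediate step.

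For (i), I would observe that the swap map $(u^\sigma, v^\sigma) \mapsto (v^\sigma, u^\sigma)$ identifies the vertex set $X$ of $G \times_{\Gamma} H$ with that of $H \times_{\Gamma} G$, and that composing with the projections of $H \times_{\Gamma} G$ gives two $\Gamma$-homomorphisms from $G \times_{\Gamma} H$ to $H$ and to $G$ respectively; hence $G \times_{\Gamma} H$, equipped with these, satisfies the universal property defining $H \times_{\Gamma} G$, and uniqueness gives $G \times_{\Gamma} H \equiv_{\Gamma} H \times_{\Gamma} G$. For (ii), I would show that $G \times_{\Gamma}(H \times_{\Gamma} K)$ admits $\Gamma$-homomorphisms to each of $G$, $H$, $K$ (by composing projections), so it satisfies the universal property of a "triple product" — precisely, for any $P'$ mapping to $G$, $H$, $K$ one first forms the unique map into $H \times_{\Gamma} K$ and then the unique map into $G \times_{\Gamma}(H \times_{\Gamma} K)$; the same reasoning applies to $(G \times_{\Gamma} H) \times_{\Gamma} K$, and since both are solutions to the same universal problem they are $\Gamma$-isomorphic. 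For (iii), I would use the dual pairing: a $\Gamma$-homomorphism out of $(G \times_{\Gamma} H)+(G \times_{\Gamma} K)$ is, by the co-product universal property (Theorem~\ref{th coproduct existence}), the same as a pair of $\Gamma$-homomorphisms, one out of $G \times_{\Gamma} H$ and one out of $G \times_{\Gamma} K$; each of these is in turn a cospan of maps into the target from $G$, $H$ and from $G$, $K$ respectively, which reorganizes to a map from $G$ together with a map from $H+K$, i.e. a map out of $G \times_{\Gamma}(H+K)$. Tracking this bijection of hom-sets naturally in the target object and applying Yoneda-style uniqueness yields the $\Gamma$-isomorphism. Alternatively, for (iii) one can argue concretely: $\rho_\Gamma$ commutes with both $\times$ and $+$ (the two corollaries just proved), reducing the identity to the classical distributivity of categorical product over disjoint union for ordinary $(n,m)$-graphs, which is then checked on vertex sets, and then one pulls back through Theorem~\ref{th isomorphism}.

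The main obstacle I anticipate is not any single identity but making the universal-property arguments airtight in the switched setting: one must check that the maps being composed are genuine $\Gamma$-homomorphisms (not merely $\langle e \rangle$-homomorphisms of particular switch-equivalent representatives), that the "unique way to define $\phi$" asserted in the proof of Theorem~\ref{th product existence} really does force uniqueness when plugged into the associativity chase, and — for the concrete approach to (iii) — that the induced-subgraph description $X$ of $\times_\Gamma$ interacts correctly with a disjoint union, i.e. that $G \times_\Gamma (H+K)$ and $(G\times_\Gamma H)+(G\times_\Gamma K)$ have the same vertex set after identifying $\rho_\Gamma(H+K)$ with $\rho_\Gamma(H)+\rho_\Gamma(K)$. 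Once the bookkeeping for $\Gamma$-switches is handled (which Lemma~\ref{lem convention} and Corollary~\ref{cor adjacency} make routine), each identity reduces to a short diagram chase.
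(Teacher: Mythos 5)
Your route for (i) and (ii) is genuinely different from the paper's and is sound: the paper proves (i) by writing down the swap map $(g^{\sigma},h^{\sigma'})\mapsto(h^{\sigma'},g^{\sigma})$ and checking adjacencies directly, and proves (ii) by first verifying the identity for $\Gamma=\langle e\rangle$ on vertex sets and then transporting it through $\rho_{\Gamma}(G\times_{\Gamma}H)\equiv_{\langle e\rangle}\rho_{\Gamma}(G)\times_{\langle e\rangle}\rho_{\Gamma}(H)$ and Theorem~\ref{th isomorphism}; your universal-property argument (both iterated products solve the same ``triple product'' problem, hence are $\Gamma$-isomorphic by uniqueness) is cleaner and avoids the switch bookkeeping entirely, at the price of having to verify that the composed projections are genuine $\Gamma$-homomorphisms, which Theorem~\ref{th product existence} already supplies.

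Your primary argument for (iii), however, has a genuine gap. Distributivity of categorical products over coproducts is \emph{not} a formal consequence of the two universal properties (it fails, e.g., in the category of groups), and the hom-set reorganization you describe runs in the wrong direction: the universal property of the product characterizes morphisms \emph{into} $G\times_{\Gamma}H$, so a $\Gamma$-homomorphism \emph{out of} $G\times_{\Gamma}H$ is not ``a cospan of maps into the target from $G$ and $H$,'' and likewise a map out of $G\times_{\Gamma}(H+K)$ does not split as a map from $G$ plus a map from $H+K$. A Yoneda argument on maps into the two sides also fails, since $\mathrm{Hom}(T,H+K)$ does not decompose as a disjoint union unless $T$ is connected. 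To make a purely categorical proof work you would need something like exponential objects (cartesian closedness), which the paper explicitly lists as an open question. Your fallback for (iii) --- prove the identity for $\langle e\rangle$ on vertex sets, use that $\rho_{\Gamma}$ commutes with $\times_{\Gamma}$ and $+$, and pull back through Theorem~\ref{th isomorphism} --- is exactly the paper's proof and is the argument you should promote to primary.
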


\begin{proof}
$(i)$ Consider the mapping $\phi\colon G \times_{\Gamma} H \rightarrow H \times_{\Gamma} G$, such that, 
$\phi(g^{\sigma}, h^{\sigma'}) = (h^{\sigma'}, g^{\sigma}) $
We show that $\phi$ is a $\Gamma$-isomorphism. 
Let $(g_1^{\sigma_1},h_1^{\sigma_1}), (g_2^{\sigma_2},h_2^{\sigma_2}) \in V(G \times_{\Gamma} H)$, suppose, $(g_2^{\sigma_2},h_2^{\sigma_2})$ is a $t$-neighbor of $(g_1^{\sigma_1},h_1^{\sigma_1})$ in $G \times_{\Gamma} H$, which implies, $g_2^{\sigma_2}$ is a $t$-neighbor of $g_1^{\sigma_1}$ in $G$ and $h_2^{\sigma_2}$ is a $t$-neighbor of $h_1^{\sigma_1}$ in $H$, and hence, $(h_2^{\sigma_2},g_2^{\sigma_2})$ is a $t$-neighbor of $(h_1^{\sigma_1},g_1^{\sigma_1})$ in $H \times_{\Gamma} G.$ Thus $\phi$ is a $\Gamma$-isomorphism. 

\bigskip

$(ii)$  Observe that when $\Gamma = \langle e \rangle$,
the function $\phi(g,(h,k)) = ((g,h),k)$
 is a $\langle e \rangle$-isomorphism 
 of $G \times_{\Gamma}(H \times_{\Gamma} K) ~\text{to}~ (G \times_{\Gamma} H) \times_{\Gamma} K$
 where $g \in G$, $h \in H$, and $k \in K$. 
Next we will prove it for general $\Gamma$. 
Notice that by Theorem~\ref{th product existence}, we have, 
\begin{equation}\label{eq product 1}
    \rho_{\Gamma}(G \times_{\Gamma}(H \times_{\Gamma} K)) = \rho_{\Gamma}(G) \times_{\langle e \rangle} \rho_{\Gamma}(H \times_{\Gamma} K) = 
\rho_{\Gamma}(G) \times_{\langle e \rangle} (\rho_{\Gamma}(H) \times_{\langle e \rangle} \rho_{\Gamma}(K))
\end{equation}
and
\begin{equation}\label{eq product 2}
    \rho_{\Gamma}((G \times_{\Gamma} H) \times_{\Gamma} K) = \rho_{\Gamma}(G \times_{\Gamma} H) \times_{\langle e \rangle} \rho_{\Gamma}(K) = 
(\rho_{\Gamma}(G) \times_{\langle e \rangle} \rho_{\Gamma}(H)) \times_{\langle e \rangle} \rho_{\Gamma}(K).
\end{equation}

Since we have already proved 
that our equality holds for 
$\Gamma = \langle e \rangle$, we obtain,
$\rho_{\Gamma}(G) \times_{\langle e \rangle} (\rho_{\Gamma}(H) \times_{\langle e \rangle} \rho_{\Gamma}(K)) \equiv_{\langle e \rangle}
(\rho_{\Gamma}(G) \times_{\langle e \rangle} \rho_{\Gamma}(H)) \times_{\langle e \rangle} \rho_{\Gamma}(K).$
Therefore by equations~(\ref{eq product 1}) and~(\ref{eq product 2}) we have 
$$\rho_{\Gamma}(G \times_{\Gamma}(H \times_{\Gamma} K))
\equiv_{\langle e \rangle}
\rho_{\Gamma}((G \times_{\Gamma} H) \times_{\Gamma} K).$$
By Theorem~\ref{th isomorphism} we have 
$$G \times_{\Gamma}(H \times_{\Gamma} K)
\equiv_{\Gamma}
(G \times_{\Gamma} H) \times_{\Gamma} K.$$
This concludes the proof.

\bigskip

$(iii)$   When $\Gamma = \langle e \rangle$ consider the 
the function $\phi(g,x) = (g,x)$
where $g \in G$ and if $x \in (H+K)$. 
However, here 
if $x \in H$, then the image $(g,x) \in G \times_{\Gamma} H$
and 
if $x \in K$, then the image $(g,x) \in G \times_{\Gamma} K$.
Observe that $\phi$ is a $\langle e \rangle$-isomorphism 
 of $G \times_{\Gamma}(H + K)$ to $(G \times_{\Gamma} H ) + (G \times_{\Gamma} K)$. Next we will prove it for general $\Gamma$. Notice that by Theorem~\ref{th product existence}, we have, \begin{equation}\label{eq product 3}
    \rho_{\Gamma}(G \times_{\Gamma}(H + K)) = \rho_{\Gamma}(G) \times_{\langle e \rangle} \rho_{\Gamma}(H + K) = 
\rho_{\Gamma}(G) \times_{\langle e \rangle} (\rho_{\Gamma}(H) + \rho_{\Gamma}(K))
\end{equation}

and

\begin{equation}\label{eq product 4}
\begin{split}
     \rho_{\Gamma}((G \times_{\Gamma} H ) + (G \times_{\Gamma} K)) &= \rho_{\Gamma}(G \times_{\Gamma} H) + \rho_{\Gamma}(G \times_{\Gamma} K)\\ 
     &= (\rho_{\Gamma}(G) \times_{\langle e \rangle} \rho_{\Gamma}(H)) +  (\rho_{\Gamma}(G) \times_{\langle e \rangle} \rho_{\Gamma}(K)).
\end{split}
  \end{equation}

Since we have already proved 
that our equality holds for 
$\Gamma = \langle e \rangle$, we obtain, 
$$\rho_{\Gamma}(G) \times_{\langle e \rangle} (\rho_{\Gamma}(H) + \rho_{\Gamma}(K)) \equiv_{\langle e \rangle}
(\rho_{\Gamma}(G) \times_{\langle e \rangle} \rho_{\Gamma}(H)) +  (\rho_{\Gamma}(G) \times_{\langle e \rangle} \rho_{\Gamma}(K)).$$ 
Therefore by equations~(\ref{eq product 3}) and~(\ref{eq product 4}) we have 
$$\rho_{\Gamma}(G \times_{\Gamma}(H + K))
\equiv_{\langle e \rangle}
\rho_{\Gamma}((G \times_{\Gamma} H ) + (G \times_{\Gamma} K)).$$
By Theorem~\ref{th isomorphism} we have 
$$G \times_{\Gamma}(H + K)
\equiv_{\Gamma}
(G \times_{\Gamma} H ) + (G \times_{\Gamma} K).$$
This concludes the proof. 
\end{proof}

\begin{remark}
The existence of product and co-product in the category of $(n,m)$-graphs with $\Gamma$-homomorphism playing the role of morphism shows the richness of the category. Moreover, it also shows that the $\Gamma$-homomorphism order, that is an order defined by $G \preceq H$ when $G \xrightarrow{\Gamma} H$ defines a lattice on the set of all $\Gamma$-cores. Here the (cores of the) categorical products and coproducts play the roles of meet and join, respectively. 
\end{remark}

\section{Chromatic number}\label{sec chromatic number}
We know that the ordinary chromatic number of a simple graph $G$ 
can be expressed as the minimum $|V(H)|$ such that $G$ admits a homomorphism to $H$. The analogue of this definition is a popular way for defining chromatic number of other types of graphs, namely, oriented graphs, $k$-edge-colored graphs, $(n,m)$-graphs, signed graphs, push graphs, etc. Here also, we can follow the same. 
The \textit{$\Gamma$-chromatic number} of an $(n,m)$-graph is given by,
$$\chi_{\Gamma: n,m}(G) = \min\{|V(H)| : G\xrightarrow{\Gamma} H\}.$$ 
Moreover, for a family $\mathcal{F}$ of $(n,m)$-graphs, 
the $\Gamma$-chromatic number is given by, 
$$\chi_{\Gamma: n,m}(\mathcal{F}) = \max\{\chi_{\Gamma: n,m}(G) : G \in \mathcal{F}\}.$$

Let $\Gamma \subseteq S_{2n+m}$ be an Abelian group acting on the set $A_{n,m}.$ For $x \in A_{n,m}$, we call the set, $\text{Orb}_x = \{ \sigma(x) : \sigma \in \Gamma \}$ an \textit{orbit of $x$}. A \textit{consistent} group $\Gamma \subset S_{2n+m}$ is such that each orbit induced by $\Gamma$ acting on the set $A_{n,m}$ contains $\overline{i}$ if and only if it contains $i$ for $i \in \{1, 2, \ldots, 2n \}$. 
Notice that, these orbits form a partition on the set $A_{n,m}$ as the relation, $x \sim y$ whenever $x = \sigma(y)$ for some $\sigma \in \Gamma$, is an equivalence relation. 

Next we establish a useful observation. 

\begin{theorem}
Let $G$ be any $(n,m)$-graph, $\Gamma$ be a switch-commutative group, then we have, $$\chi_{\Gamma: n,m}(G) \leq \chi_{n,m}(G) \leq |\Gamma|. \chi_{\Gamma: n,m}(G).$$
\end{theorem}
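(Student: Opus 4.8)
The plan is to establish the two inequalities separately, using the bridge Proposition~\ref{prop chromatic num} which asserts that $G \xrightarrow{\Gamma} H$ if and only if $G \xrightarrow{\langle e \rangle} \rho_{\Gamma}(H)$. For the lower bound $\chi_{\Gamma: n,m}(G) \leq \chi_{n,m}(G)$, I would argue that any target $H$ for an $\langle e \rangle$-homomorphism is also a target for a $\Gamma$-homomorphism, since $\langle e \rangle \subseteq \Gamma$ forces every $\langle e \rangle$-homomorphism to be a $\Gamma$-homomorphism (the witnessing $\Gamma$-equivalent graph being $G$ itself). Hence the minimization defining $\chi_{\Gamma: n,m}(G)$ is over a superset of targets, giving $\chi_{\Gamma: n,m}(G) \leq \chi_{n,m}(G)$.

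For the upper bound $\chi_{n,m}(G) \leq |\Gamma| \cdot \chi_{\Gamma: n,m}(G)$, I would start from an optimal $\Gamma$-homomorphism $G \xrightarrow{\Gamma} H$ with $|V(H)| = \chi_{\Gamma: n,m}(G)$. By Proposition~\ref{prop chromatic num} this yields an $\langle e \rangle$-homomorphism $G \xrightarrow{\langle e \rangle} \rho_{\Gamma}(H)$. Now $\rho_{\Gamma}(H)$ is built from $H^{*}$, which has exactly $|\Gamma| \cdot |V(H)|$ vertices (one copy $h^{\sigma}$ for each vertex $h$ and each $\sigma \in \Gamma$), and $\Gamma$-switching does not change the vertex set. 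Therefore $|V(\rho_{\Gamma}(H))| = |\Gamma| \cdot |V(H)| = |\Gamma| \cdot \chi_{\Gamma: n,m}(G)$, and since $\chi_{n,m}(G)$ is the minimum order of an $\langle e \rangle$-homomorphic image, we get $\chi_{n,m}(G) \leq |\Gamma| \cdot \chi_{\Gamma: n,m}(G)$.

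I expect the argument to be essentially routine; the only point requiring slight care is confirming that $\langle e \rangle \subseteq \Gamma$ is always valid (the identity permutation lies in every subgroup $\Gamma$, so this is immediate) and that Proposition~\ref{prop chromatic num} is applicable, which it is since $\Gamma$ is assumed switch-commutative in the statement. A secondary subtlety is that $\chi_{\Gamma: n,m}(G)$ and $\chi_{n,m}(G)$ are well-defined, i.e., the relevant homomorphism targets exist; this follows because $G$ itself (or $\rho_{\Gamma}(G)$) always serves as a valid target. No genuine obstacle is anticipated — the heavy lifting has already been done in Proposition~\ref{prop chromatic num}, and the bound simply tracks the blow-up in vertex count introduced by passing to the $\Gamma$-switched graph.
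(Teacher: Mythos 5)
Your proposal is correct and follows essentially the same route as the paper: the first inequality because every $\langle e\rangle$-homomorphism is in particular a $\Gamma$-homomorphism, and the second by applying Proposition~\ref{prop chromatic num} to an optimal target $H$ and counting the $|\Gamma|\cdot|V(H)|$ vertices of $\rho_{\Gamma}(H)$. The paper's proof is just a two-line version of exactly this argument.
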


\begin{proof}
    As $\Gamma$-homomorphism is, in particular, an $\langle e \rangle$-homomorphism, the first inequality holds. The second inequality follows from Proposition~\ref{prop chromatic num}.
\end{proof}

An immediate corollary follows. 

\begin{corollary}
    Let $G$ be an $(n,m)$-graph and let $\Gamma_1, \Gamma_2 \subseteq S_{2n+m}$ be two switch-commutative groups.  Then we have 
    $$\frac{\chi_{\Gamma_1 : n,m}(G)}{|\Gamma_2|} \leq \chi_{\Gamma_2 : n,m}(G) \leq |\Gamma_1| \cdot \chi_{\Gamma_1: n,m} (G).$$
   \end{corollary}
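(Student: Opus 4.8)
The plan is to derive the corollary directly from the theorem that asserts
$$\chi_{\Gamma: n,m}(G) \leq \chi_{n,m}(G) \leq |\Gamma| \cdot \chi_{\Gamma: n,m}(G)$$
by applying it twice, once with $\Gamma = \Gamma_1$ and once with $\Gamma = \Gamma_2$, and then chaining the resulting inequalities. Here $\chi_{n,m}(G) = \chi_{\langle e \rangle : n,m}(G)$ plays the role of a common pivot that connects the two $\Gamma$-chromatic numbers.

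First I would write, from the theorem applied to $\Gamma_1$, the two bounds $\chi_{n,m}(G) \leq |\Gamma_1| \cdot \chi_{\Gamma_1 : n,m}(G)$ and $\chi_{\Gamma_1 : n,m}(G) \leq \chi_{n,m}(G)$. Then I would apply the theorem to $\Gamma_2$, obtaining $\chi_{\Gamma_2 : n,m}(G) \leq \chi_{n,m}(G)$ and $\chi_{n,m}(G) \leq |\Gamma_2| \cdot \chi_{\Gamma_2 : n,m}(G)$. Combining $\chi_{\Gamma_2 : n,m}(G) \leq \chi_{n,m}(G)$ with $\chi_{n,m}(G) \leq |\Gamma_1| \cdot \chi_{\Gamma_1 : n,m}(G)$ gives the upper bound $\chi_{\Gamma_2 : n,m}(G) \leq |\Gamma_1| \cdot \chi_{\Gamma_1 : n,m}(G)$. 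For the lower bound, combine $\chi_{\Gamma_1 : n,m}(G) \leq \chi_{n,m}(G)$ with $\chi_{n,m}(G) \leq |\Gamma_2| \cdot \chi_{\Gamma_2 : n,m}(G)$ to get $\chi_{\Gamma_1 : n,m}(G) \leq |\Gamma_2| \cdot \chi_{\Gamma_2 : n,m}(G)$, which rearranges to $\chi_{\Gamma_1 : n,m}(G) / |\Gamma_2| \leq \chi_{\Gamma_2 : n,m}(G)$.

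There is essentially no obstacle here; the only mild point to keep straight is which group supplies the multiplicative factor in each half of the double inequality, so that $|\Gamma_1|$ appears on the upper side and $|\Gamma_2|$ on the lower side, matching the statement. Since $|\Gamma_2| \geq 1$, all quantities are positive and the division by $|\Gamma_2|$ is harmless. One could also remark that the statement is symmetric under swapping the roles of $\Gamma_1$ and $\Gamma_2$, which provides a sanity check on the bookkeeping.
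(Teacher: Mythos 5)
Your proposal is correct and matches the paper's intent exactly: the paper states this as "an immediate corollary" of the preceding theorem, and the intended derivation is precisely your two applications of that theorem (to $\Gamma_1$ and $\Gamma_2$) chained through the common pivot $\chi_{n,m}(G)$. Nothing further is needed.
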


\begin{proposition}\label{obs orbit}
   Let $\Gamma \subseteq S_{2n+m}$ be a consistent group, $G$ be an $(n,m)$-graph, and  $G'$ be a $\Gamma$-equivalent graph of $G$. If a vertex $v$ is a $t$-neighbor of $u$ in $G$,
   then $v$ must be a $\sigma(t)$ neighbor of $u$ in $G'$ for some $\sigma \in \Gamma$. 
\end{proposition}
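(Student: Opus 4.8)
The plan is to argue directly from the definitions, tracking how adjacencies evolve under a sequence of $\Gamma$-switches and using the \emph{consistent} hypothesis to control the effect of the bar operation. Since $G'$ is $\Gamma$-equivalent to $G$, it is obtained from $G$ by a finite sequence of $\sigma$-switches at various vertices; by grouping together all switches applied at a single vertex (and using that $\Gamma$ is a group, so a composition of elements of $\Gamma$ is again in $\Gamma$), we may assume each vertex $w$ is switched exactly once, by some $\tau_w \in \Gamma$. In particular $u$ is switched by some $\tau_u \in \Gamma$ and $v$ by some $\tau_v \in \Gamma$.

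Next I would compute the new adjacency type between $u$ and $v$. By Corollary~\ref{cor adjacency}, if $v$ is a $t$-neighbor of $u$ in $G$ and we apply $\tau_u$ at $u$ and $\tau_v$ at $v$, then $v$ becomes a $\overline{\tau_v(\overline{\tau_u(t)})}$-neighbor of $u$ in $G'$. So it suffices to show that $\overline{\tau_v(\overline{\tau_u(t)})} = \sigma(t)$ for some $\sigma \in \Gamma$. The issue is that the bar operation is not, a priori, given by an element of $\Gamma$, so the expression $\overline{\tau_v(\overline{\tau_u(t)})}$ need not obviously lie in the orbit of $t$. This is where consistency enters: I claim that for a consistent group, the map $x \mapsto \overline{x}$ permutes each orbit of $\Gamma$, i.e. $\mathrm{Orb}_{\overline{x}} = \mathrm{Orb}_x$ for every $x \in A_{n,m}$ (for edge colors this is trivial since $\overline{x}=x$; for arc colors it is exactly the defining condition that an orbit contains $i$ iff it contains $\overline{i}$). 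Consequently $\overline{\tau_u(t)} \in \mathrm{Orb}_{\tau_u(t)} = \mathrm{Orb}_t$, hence $\tau_v(\overline{\tau_u(t)}) \in \mathrm{Orb}_t$ as well, and finally $\overline{\tau_v(\overline{\tau_u(t)})} \in \mathrm{Orb}_t$. Thus there is some $\sigma \in \Gamma$ with $\overline{\tau_v(\overline{\tau_u(t)})} = \sigma(t)$, which is exactly the claim.

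The main obstacle is making the reduction to ``each vertex switched once'' fully rigorous in the presence of possibly many switches at intermediate vertices, and checking that it is harmless that $G'$ is merely \emph{some} $\Gamma$-equivalent graph rather than one of a restricted form; this is routine given that $\Gamma$ is closed under composition, but it must be stated. A secondary subtlety is the case where $u$ and $v$ are adjacent by an arc versus an edge, and the orientation bookkeeping — here the convention $\overline{\overline{i}} = i$ together with consistency makes both cases uniform, so no separate argument is needed. I would present the orbit-stability lemma ($\overline{\mathrm{Orb}_x} = \mathrm{Orb}_x$) as the crux, then conclude in one line via Corollary~\ref{cor adjacency}.
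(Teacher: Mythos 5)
Your crux --- that consistency forces $\mathrm{Orb}_{\overline{x}} = \mathrm{Orb}_x$ for every $x \in A_{n,m}$, so that the type of the $uv$-adjacency can never leave $\mathrm{Orb}_t$ and must therefore equal $\sigma(t)$ for some $\sigma \in \Gamma$ --- is exactly the paper's argument, and that part of your proof is correct and complete.

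The place where your write-up does not hold together as stated is the reduction to ``each vertex is switched exactly once.'' You justify it by closure of $\Gamma$ under composition, but that only lets you merge \emph{consecutive} switches at the same vertex. If the sequence interleaves switches at $u$ and $v$ (say $\sigma_1$ at $u$, then $\tau$ at $v$, then $\sigma_2$ at $u$), the resulting type is $\sigma_2\bigl(\overline{\tau(\overline{\sigma_1(t)})}\bigr)$, and rewriting this as $\overline{\tau(\overline{\sigma_2\sigma_1(t)})}$ is precisely the switch-commutativity identity of Lemma~\ref{lem convention} --- a hypothesis the proposition does not assume (it assumes only consistency; the paper is explicit that being Abelian does not by itself imply switch-commutativity). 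The same objection applies to your invocation of Corollary~\ref{cor adjacency}, which is stated only for switch-commutative groups. The fix is to drop the reduction entirely and argue one switch at a time, as the paper does: a switch at a vertex outside $\{u,v\}$ leaves the $uv$-type unchanged, a $\sigma$-switch at $u$ sends $t$ to $\sigma(t) \in \mathrm{Orb}_t$, and a $\sigma$-switch at $v$ sends $t$ to $\overline{\sigma(\overline{t})} \in \mathrm{Orb}_t$ by your orbit-stability observation; since the orbits partition $A_{n,m}$, induction on the length of the switching sequence finishes the proof without ever needing to reorder or merge switches.
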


\begin{proof}
Observe that, it is enough to prove the statement assuming  $G'$ is obtained from $G$ by performing a $\sigma$-switch on $v$.

By Lemma~\ref{lem convention}, in $G'$, $v$ is a $\overline{\sigma(\overline{t})}$-neighbor of $u$. Note that, $\overline{t} \in Orb_t$ as $\Gamma$ is a consistent group. 
Since $\overline{t} \in Orb_t$, we have $\sigma(\overline{t}) \in Orb_{\overline{t}} = Orb_t$. Therefore, 
$\overline{\sigma(\overline{t})} \in Orb_t$ as $\Gamma$ is consistent. 
\end{proof}

Next we focus on studying the $\Gamma$-chromatic number of $(n,m)$-forests.

\begin{theorem}
Let $\mathcal{F}$ be the family of $(n,m)$-forests and let $k$ be the number of orbits   
 of $A_{n,m}$  with respect to the action of $\Gamma$. 
Then,
\[   \chi_{\Gamma : n,m}(\mathcal{F})  \leq \begin{cases} 
		
		k+2  &~\text{if}~ $k$ ~\text{is even}, \\
		k+1 &~\text{if}~ $k$ ~\text{is odd}.
	\end{cases} \]
	Moreover, equality holds if $\Gamma$ is consistent. 
\end{theorem}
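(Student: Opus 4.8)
The plan is to treat the upper bound and the matching lower bound separately, and to reduce the $\Gamma$-coloring question to an ordinary $(n,m)$-coloring question via Proposition~\ref{prop chromatic num} wherever possible. For the \emph{upper bound}, I would first construct an explicit target $(n,m)$-graph $T$ on $k+2$ (resp.\ $k+1$) vertices with the property that every $(n,m)$-forest admits a $\Gamma$-homomorphism to it. The natural candidate is built so that between any two ``color classes'' every orbit-type of adjacency is available in at least one direction; concretely one wants $T$ to be such that, whenever one tries to extend a partial $\Gamma$-coloring along an edge of a tree, there is always a legal target vertex and, crucially, one is still free to choose the switch at the new vertex so that the orbit constraint is met (this is where the parity of $k$ enters, since with $k$ orbits one needs roughly $k/2$ ``slots'' to encode directions of arc-type orbits and we must round up). I would then prove the bound by induction on the number of vertices of the forest: delete a leaf $v$ with neighbor $u$ joined by an adjacency of orbit-type $t$, $\Gamma$-color the smaller forest, and then show that the already-placed image of $u$ has a $t'$-neighbor in $T$ for some $t' \in \mathrm{Orb}_t$, invoking Proposition~\ref{obs orbit} to see that realizing a $\sigma(t)$-neighbor is as good as realizing a $t$-neighbor after an appropriate switch at $v$.

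For the \emph{lower bound}, assuming $\Gamma$ is consistent, I would exhibit a family of $(n,m)$-forests (in fact trees, even of bounded depth) that force $k+2$ (resp.\ $k+1$) vertices in any $\Gamma$-homomorphic image. The standard device, generalizing Theorem~1.1 of~\cite{nevsetvril2000colored}, is an ``incidence tree'': take a large complete-like construction of depth two or three where a root has many children, each child realizing a prescribed orbit-type of adjacency to the root, and each grandchild realizing a prescribed pair of orbit-types, so that a counting/pigeonhole argument on the images shows two vertices at the same level with a forbidden combined adjacency pattern unless the target has the claimed number of vertices. The consistency hypothesis is what guarantees that each orbit genuinely behaves like a single usable ``color'' for both orientations of an arc, so that the count of distinct required adjacency patterns is exactly $k$ (plus the parity correction), matching the upper bound. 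I would phrase this by passing to $\rho_{\Gamma}(T)$ via Proposition~\ref{prop chromatic num}: a $\Gamma$-homomorphism of the forest to $T$ is an $\langle e\rangle$-homomorphism to $\rho_\Gamma(T)$, and on the $\langle e\rangle$ side the classical no-homomorphism arguments for $(n,m)$-graphs apply, after checking that the orbit structure of $\Gamma$ controls which adjacencies survive switching (again Proposition~\ref{obs orbit}).

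The main obstacle I anticipate is the precise combinatorial bookkeeping in the upper-bound induction: one must verify that the constructed target $T$ really has ``enough room'' to always extend the coloring along a leaf while keeping the switch at the new leaf free, and that the rounding in the parity cases ($k+2$ for $k$ even versus $k+1$ for $k$ odd) is exactly right and not off by one. This amounts to a careful case analysis of how an orbit can contain both $i$ and $\overline{i}$ (arc-type orbits, which need a direction bit) versus orbits sitting entirely among edge-colors or forming a ``self-paired'' structure; getting the dimension count of the encoding to land on $\lceil\,\cdot\,\rceil$ of the right quantity is the delicate point. A secondary difficulty is making the lower-bound forests genuinely acyclic while still forcing the full count — one must spread the forcing constraints across many disjoint trees or across the levels of a single tree so that no cycle is needed, which is exactly the trick that makes Theorem~1.1 of~\cite{nevsetvril2000colored} work and which I expect to generalize without essential new ideas once the upper bound's target graph is pinned down.
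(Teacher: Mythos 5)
Your overall architecture --- an explicit small complete target plus a leaf-deletion induction for the upper bound, and small trees forcing the count for the lower bound --- is the same as the paper's, and two of your ingredients are right: the odd-case lower bound (a star at whose centre all $k$ orbit types appear, which survives any switching because orbits are preserved under a consistent $\Gamma$, i.e.\ Proposition~\ref{obs orbit}), and the inductive step (switch the leaf so that its unique adjacency becomes the orbit representative, then map it to an $\alpha_i$-neighbour of the image of its parent). But the two points you defer are the actual content of the proof, and your heuristic for one of them points the wrong way. The target graph in the paper is built by decomposing $K_{k+1}$ (for $k$ odd, so that $k+1$ is even) into $\frac{k-1}{2}$ edge-disjoint Hamiltonian cycles plus a perfect matching, assigning two orbit representatives to each cycle and one to the matching, so that every vertex acquires an $\alpha_i$-neighbour for every $i$; the parity correction for even $k$ comes from padding with a dummy orbit so that this decomposition exists, not from ``needing $k/2$ slots to encode directions of arc-type orbits'' --- the construction makes no distinction between arc orbits and edge orbits.

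More seriously, your lower bound for even $k$ is not yet an argument. A pigeonhole on ``combined adjacency patterns'' will not rule out a $(k+1)$-vertex target, since a target in which every vertex has a neighbour of every orbit type is not obviously impossible by counting patterns. The paper's argument is a parity obstruction: take a height-two tree whose every internal vertex has exactly one $\alpha_i$-neighbour for each $i$; if the target $H$ had only $k+1$ vertices, every vertex of $H$ would be forced to have \emph{exactly} one neighbour in each orbit class, and restricting to a single orbit class would then yield a perfect matching on the odd number $k+1$ of vertices --- a contradiction. That perfect-matching parity step is the one idea your sketch is missing; without it the ``moreover'' clause for even $k$ does not follow. (Your suggestion to route the lower bound through $\rho_{\Gamma}$ and Proposition~\ref{prop chromatic num} is workable but unnecessary; the paper argues directly with switches and orbits.)
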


\begin{proof}
 We first prove the upper bound. Assume that $k$ is odd. In this case consider the complete graph $K_{k+1}$. 
 We will construct a complete $(n,m)$-graph having $K_{k+1}$ as its underlying graph. 
 As $(k+1)$ is even, we know that $K_{k+1}$ 
 can be decomposed into $\frac{k-1}{2}$ edge-disjoint Hamiltonian cycles  and a perfect matching.

  Let 
  $\{ \alpha_1, \alpha_2, \ldots \alpha_k \}$ be the representatives of the $k$ orbits.  Assume that if $\alpha$ and $\bar{\alpha}$ are in different orbits for some $\alpha \in \{2, 4, \ldots, 2n\}$, then either both $\alpha$ and $\bar{\alpha}$ are chosen as representatives, or neither of them are chosen as representatives.

  If $\alpha$ and $\bar{\alpha}$ are both chosen as representatives, 
  they are called \textit{representative pairs} $(\alpha, \bar{\alpha})$.
  For any representative pairs $(\alpha, \bar{\alpha})$,
  notice that $\alpha$ must be color of an arc. For each representative pair $(\alpha, \bar{\alpha})$ 
    orient one of the Hamiltonian cycles as a directed cycle, and color all its arcs with $\alpha$ (equivalently, all its reverse arcs with 
  $\bar{\alpha}$). That means,  every vertex of the graph will have an $\alpha$-neighbor and a $\bar{\alpha}$-neighbor.

  For those representatives  $\alpha_i$ for which $\bar{\alpha}$ also belongs to the same orbit, we will arbitrarily choose a previously not chosen $\alpha_j$ that is not part of any representative pair, where 
  $i, j \in \{1, 2, \ldots, k\}$. Now we will choose a Hamiltonian cycle which is not colored, and will color its edges alternatively with the colors $\alpha_i$ and $\alpha_j$ while traversing it in a clockwise direction (with respect to any embedding of the cycle). This will ensure that every vertex of the graph has a $\beta_i$-neighbor and a $\beta_j$-neighbor, where 
  $\beta_i \in \{\alpha_i, \bar{\alpha_i}\}$ and 
  $\beta_j \in \{\alpha_j, \bar{\alpha_j}\}$. 

  After finishing this process, as $k$ is odd, one perfect matching will remain uncolored and one $\alpha_l$ will remain unused (in the above process). This $\alpha_l$ does not belong to a representative pair, and thus $\bar{\alpha}$ belongs to the orbit of $\alpha_l$. 
  Finally we will color the edges of the perfect matching with $\alpha_l$. 
  Thus all vertices of the graph has either an $\alpha_l$-neighbor, 
  or a $\bar{\alpha_l}$-neighbor. 
With a little abuse of notation, we denote the so obtained $(n,m)$-graph by $K_{k+1}$ itself. 
   
   \medskip

  We claim that every $(n,m)$-forest admit a $\Gamma$-homomorphism to $K_{k+1}$. If not, then 
   there exists a minimal (with respect to number of vertices) counter-example  $F$ 
   that does not admit $\Gamma$-homomorphism to $K_{k+1}$.
   Let $u$ be a leaf with the vertex $v$ as its only neighbor in $F$, then  $F \setminus \{u\}$ is no 
   longer a minimal counter-example, thus it admits a $\Gamma$-homomorphism $f$ to $K_{k+1}$.
  That means, there exists a $\Gamma$-equivalent $(n,m)$-graph 
  $F'$ of $F \setminus \{u\}$ such that $f$ is a 
  $\langle e \rangle$-homomorphism of $F'$ to $K_{k+1}$. 
  Suppose $v$ is a $\beta$-neighbor of $u$, and  $\beta$ belongs to an orbit whose representative is $\alpha$. If $\alpha$ belongs to a representative pair, 
  then switch $u$ to convert $v$ into an $\alpha$-neighbor of $u$. 
  Otherwise, switch $u$ to convert $v$ into a 
  $\bar{\alpha}$-neighbor of $u$. 
 Since $u$ is a leaf,  this does not affect any adjacencies in $F'$.  
    In either case, $u$ is an $\alpha_i$-neighbor of $v$ for some $i \in \{1, 2, \ldots, k\}$. 

   Now, we extend $f$ to an
   $\langle e \rangle$-homomorphism of $F'$ to $K_{k+1}$
   by mapping $u$ to the $\alpha_i$-neighbor $f(v)$ in $K_{k+1}$, since every vertex in $K_{k+1}$ has an $\alpha_i$-neighbor for every $i \in \{1,2, \ldots, k\}$. 
   That means, there exists a $\Gamma$-homomorphism of $F$ to $K_{k+1}$.
    This contradicts the minimality of $F$. Hence every $(n,m)$-forest admits a $\Gamma$-homomorphism to $K_{n,m}$.

  \medskip
  
  Secondly, assume that $k$ is even. Note that, if there were $(k+1)$ orbits instead, then by what we have proved above, it would be possible to show that 
  every $(n,m)$-forest admits a $\Gamma$-homomorphism to an $(n,m)$-graph having
  $K_{k+2}$ as underlying graph. Therefore, assuming a dummy orbit we are done with this case too.

  \medskip

  Next we will prove the tightness of the upper bound when $\Gamma$ is consistent. 
  Let $\{ \alpha_1, \alpha_2, \ldots ,\alpha_k \}$ be the representatives of the $k$ orbits.   For odd values of $k$, 
   consider the star $(n,m)$-graph $S$ on $(k+1)$ vertices:
   the central vertex $v$ having $k$ neighbors $v_1, v_2, \ldots, v_k$.  
   Let $v_i$ be a $\alpha_i$-neighbor of $v$ for all $i \in \{1,2, \ldots, k\}$.
    No matter how we switch the vertices of $S$, the vertex $v$ will have
     $k$ distinctly adjacent neighbors. 
     Therefore we have $\chi_{\Gamma : n,m}(S) \geq k+1$, and thus  
     $\chi_{\Gamma : n,m}(\mathcal{F}) = k+1$ when $k$ is odd and $\Gamma$ is consistent.

For even values of $k$, consider a rooted tree $T$ of height two in which every vertex, other than the leaves, has exactly one $\alpha_i$-neighbor for $i \in \{1,2, \ldots, k\}$. 
Suppose $T$ admits a $\Gamma$-homomorphism $f$ to an $(n,m)$-graph $H$. 
Let $r$ be the root of $T$. If $H$ has $(k+1)$ vertices, then the images of
the vertices from $N[r]$ 
under $f$ will be a spanning subgraph in $H$. 
Furthermore, notice that each vertex of $N[r]$ has at least one $\beta_i$-neighbor, where $\beta_i$ belongs to the $i^{th}$ orbit. 
Thus their images should also have the same property, that is each of them must have at least one $\beta_i$-neighbor, where $\beta_i$ belongs to the $i^{th}$ orbit. However, as $H$ has only $(k+1)$ vertices, each of its vertices are forced to have exactly one 
$\beta_i$-neighbor, where $\beta_i$ belongs to the $i^{th}$ orbit. Now if we restrict ourselves to only the neighbors whose type is from a particular orbit, that must give us a perfect matching, which is impossible as $(k+1)$ is odd. Therefore, $H$ must have at least $(k+2)$ vertices which implies the lower bound. 
\end{proof}

The above result implies the upper bound of Theorem~1.1 of~\cite{nevsetvril2000colored}. 

\section{Concluding remarks}\label{sec conclusion}
In this article, we introduced a generalized switch operation on $(n,m)$-graphs and studied their basic algebraic properties. Naturally, this topic generates a lot of interesting open questions, especially, in an effort of extending the known results in the domain of graph homomorphisms. We list a few of them here. 
\begin{enumerate}[(i)]
    \item Is it possible to generalize the notion of exponential graphs (see section $2.4$ in \cite{hell2004graphs}) 
    in the category of $(n,m)$-graphs with respect to $\Gamma$-homomorphism 
    where $\Gamma$ is switch-commutative? 
    
    \item Is it possible to obtain the analogue of the Density Theorem (see Theorem $3.30$ in ~\cite{hell2004graphs}) in this context? At present, such an analogue is unknown even for $(0,2)$-graphs.

    \item Finding the $\Gamma$-chromatic number for other graph families like planar graphs, partial $2$-trees, outerplanar graphs, cycles, grids, graphs having bounded maximum degree, etc. can be other directions of study in this set up.

    \item Characterizing $(n,m)$-graphs  $\Gamma$-equivalent to a $(1,0)$-graph (that is, monochromatic) is an interesting natural problem.

    \item Studying the $(n,m)$-cycles may further lead us to finding a characterization of $\Gamma$-equivalent graphs, similar to what Zaslavsky~\cite{zaslavsky1982signed} did for signed graphs. 
    
    \item If $\Gamma$ is not switch-commutative, then does the categorical product exist? 

    \item Given a pre-decided $(n,m)$-graph $H$, a switch-commutative group $\Gamma$ and an input $(n,m)$-graph $G$, 
    the decision problem 
    ``does $G$ admit a $\Gamma$-homomorphism to $H$?'' is in $NP$ due to Proposition~\ref{prop chromatic num}. Can we characterize the full dichotomy of this problem? 
\end{enumerate}

As a remark, it is worth mentioning that using the notion of generalized switch (implicitly), it was possible to improve the existing upper bounds of the $\langle e \rangle$-chromatic number of $(n,m)$-partial $2$-trees where $2n+m = 3$~\cite{lahiri2021chromatic}. Therefore, it will not be surprising if $\Gamma$-homomorphism becomes useful  as a technique to establish bounds for $(n,m)$-chromatic number of graphs. 

\acknowledgements
\label{sec:ack}
We would like to thank the anonymous reviewers. Their comments and suggestions have greatly contributed to the improvement of the manuscript.

\nocite{*}
\bibliographystyle{abbrvnat}
\bibliography{sample-dmtcs}
\label{sec:biblio}

\end{document}